\theoremstyle{plain}
\newtheorem{theorem}{Theorem}
\newtheorem{lemma}{Lemma}
\newtheorem{corollary}{Corollary}
\newtheorem{remark}{Remark}
\newcommand{\EE}{\mathbb{E}}
\newcommand{\PP}{\mathbb{P}}
\newcommand{\RR}{\mathbb{R}}
\newcommand{\calA}{\mathcal{A}}
\newcommand{\calB}{\mathcal{B}}
\newcommand{\calC}{\mathcal{C}}
\newcommand{\calE}{\mathcal{E}}
\newcommand{\calG}{\mathcal{G}}
\newcommand{\calN}{\mathcal{N}}
\newcommand{\calS}{\mathcal{S}}
\newcommand{\calV}{\mathcal{V}}
\newcommand{\QED}{\hfill $\square$}
\newcommand{\Sum}[2]{\sum\limits_{#1}^{#2}}
\DeclarePairedDelimiter\ceil{\lceil}{\rceil}
\newcommand{\lp}{\left(}
\newcommand{\rp}{\right)}
\newcommand{\lb}{\left[}
\newcommand{\rb}{\right]}
\newcommand{\lbp}{\left\{}
\newcommand{\rbp}{\right\}}
\newcommand{\bbm}{\mathbbm}
\newcommand{\msf}{\mathsf}
\newcommand{\eqDef}{\triangleq}
\newcommand{\diid}{\overset{\text{i.i.d.}}{\sim}}
\title{Adaptive Group Testing on Networks with Community Structure: \\The Stochastic Block Model}
\author{Surin Ahn, Wei-Ning Chen, and Ayfer {\"O}zg{\"u}r

	\thanks{Surin~Ahn, Wei-Ning Chen, and Ayfer~{\"O}zg{\"u}r are with the Department of Electrical Engineering, Stanford University, Stanford, CA 94305 USA (e-mail: surinahn@stanford.edu; wnchen@stanford.edu; aozgur@stanford.edu).}
	
	\thanks{This work was presented in part at the \emph{2021 IEEE International Symposium on Information Theory (ISIT)}. }
	}
\begin{document}

	\maketitle 
	
	\begin{abstract}
   Group testing was conceived during World War II to identify soldiers infected with syphilis using as few tests as possible, and it has attracted renewed interest during the COVID-19 pandemic. A long-standing assumption in the probabilistic variant of the group testing problem is that individuals are infected by the disease \textit{independently}.
   However, this assumption rarely holds in practice, as diseases often spread through interactions between individuals and therefore cause infections to be correlated. Inspired by characteristics of COVID-19 and other infectious diseases, we introduce an infection model over networks which generalizes the traditional i.i.d. model from probabilistic group testing. Under this model, we ask whether knowledge of the network structure can be leveraged to perform group testing more efficiently, focusing specifically on community-structured graphs drawn from the stochastic block model. We prove that a simple community-aware algorithm outperforms the baseline binary splitting algorithm when the model parameters are conducive to ``strong community structure.'' Moreover, our novel lower bounds imply that the community-aware algorithm is order-optimal in certain parameter regimes. We extend our bounds to the noisy setting and support our results with numerical experiments. 
	\end{abstract}
	
	\begin{IEEEkeywords}
	Group testing, infectious diseases, adaptive algorithms, stochastic block model, network community structure
    \end{IEEEkeywords}
	
	\section{Introduction} 
	Identifying individuals who are infected by a disease is crucial for curbing epidemics and ensuring the well-being of society. However, due to high costs or limited resources, it is often infeasible to test every member of the population individually. During World War II, when the U.S. military sought to identify soldiers infected with syphilis, Dorfman  introduced the breakthrough concept of \textit{group testing} \cite{dorfman1943detection}. He showed that by testing \textit{groups} or \textit{pools} of samples, the infected people in a population of size $n$ can be identified with far fewer than $n$ tests. The key insight was that if the infected population is sparse, then each pooled test is likely to produce a negative result, in which case all individuals included in the test can simultaneously be deemed healthy. Today, group testing strategies are actively being used in the COVID-19 pandemic to identify infected individuals in an efficient and cost-effective manner \cite{ellenberg2020covid, mallapaty2020covid, cdc2020covid, hogan2020sample}. 
	There has also been a recent influx of papers which seek to improve or better understand group testing for COVID-19, e.g., \cite{mentus2020analysis, verdun2021group, theagarajan2020group, eberhardt2020multi, ghosh2020compressed, cohen2020multi, abraham2020bloom, nikolopoulos2020community, gabrys2020ac, zhu2020noisy, goenka2020contact, aldridge2020conservative}. 

    Dorfman’s seminal work and many subsequent works by other authors \cite{hwang1975generalized, wolf1985born, sobel1959group, berger2002asymptotic, luo2008neighbor, li2014group, kealy2014capacity}
    assume that the disease infects individuals in a statistically independent fashion. The simplest and most widely studied case, known as the \emph{i.i.d. model} or \emph{binomial model}, assumes individuals are infected independently with some common probability $p$.\footnote{ In a related, commonly studied probabilistic model---often called the \emph{combinatorial prior model} or the \emph{hypergeometric model }---it is assumed that a random set of $d$ individuals out of $n$ are infected according to some distribution (typically uniform) over all ${n \choose d}$ possibilities \cite{hwang1981hypergeometric, aldridge2014group, chan2014non, mazumdar2016nonadaptive, scarlett2016converse, cai2017efficient, scarlett2018noisy, lee2019saffron, inan2019optimality, inan2020strongly}. While this is slightly different from an i.i.d. assumption, it is still somewhat simplistic and fails to capture any dependencies that may exist between individuals.} However, this assumption of independence rarely holds in practice. Diseases typically spread through \textit{interactions} between individuals (e.g., familial, work-related, or other social interactions), thereby inducing correlated infections. It is thus natural to ask whether exploiting information about this connectivity structure can lead to more efficient group testing strategies. This problem is especially timely given the critical role that group testing has played in the COVID-19 pandemic, and given that the disease is known to spread between individuals in close contact with each other.%
	
    In this paper, we contribute to the nascent area of ``group testing under correlations'' by investigating whether knowledge of the \textit{interaction network} dictating the spread of the disease can be leveraged to perform pooled testing more efficiently. We introduce a novel community-oriented infection model, called the \textit{stochastic block infection model} ($\msf{SBIM}$), which generalizes the standard i.i.d. model to a setting in which the disease can be transmitted between individuals. Our model is equivalent to a certain graph-based infection spread mechanism operating upon the well-known \textit{stochastic block model} (SBM) for random graphs. For decades, the SBM has been utilized across the social, biological, and information sciences as a very simple yet natural way to model community structure in probabilistic networks.

	On the algorithmic side, we consider \textit{adaptive} group testing schemes, where the design of each test can be informed by the previous test results. We compare two different schemes: the standard \textit{binary splitting} algorithm \cite{du2000combinatorial} which is oblivious to the underlying network structure, and a simple \textit{community-aware} algorithm which essentially performs two stages of binary splitting: the first stage identifies the communities containing at least one infected member, and the second stage performs more fine-grained testing within the infected communities.
	We give precise upper bounds on the expected number of tests performed by each algorithm. Crucially, we show that when the model parameters yield ``strong community structure'' (in which case the disease is much more likely to be transmitted within a community than between communities), the community-aware algorithm's average complexity is asymptotically strictly better than that of binary splitting.
	Furthermore, we derive novel information-theoretic lower bounds that apply to all adaptive strategies and imply the order-optimality of the community-aware algorithm in certain parameter regimes.%
	We then extend our algorithms and bounds to the noisy setting---in which the test outcomes are passed through a binary symmetric channel---and find that the presence of noise does not affect the relative gains of using a community-oriented approach. Finally, we corroborate our results with numerical experiments. To the best of our knowledge, this is the first thorough characterization of the complexity of adaptive group testing in a networked setting. 
	
We note that the underlying principles of this paper may be relevant to numerous settings beyond epidemiology. In the past, group testing has been successfully applied to diverse domains including wireless communications \cite{wolf1985born, berger1984random, luo2008neighbor, inan2017sparse, inan2018energy, inan2019group, inan2019sparse, cohen2020efficient}, machine learning \cite{ubaru2017multilabel, zhou2014parallel, malioutov2013exact}, signal processing \cite{gilbert2008group, cohen2019serial}, and the analysis of data streams \cite{cormode2005s, emad2014poisson}. In these settings and others, there may be a natural ``clustering'' of the population into different subgroups which can inform the design of better group testing strategies, i.e., be exploited as ``side information.'' For example, devices which are closer together in a multiple access network may tend to be active or inactive at the same time. Exploring the potential applications of network-oriented group testing to these types of problems is of great interest.

	\paragraph{Related Works} 
	In \textit{graph-constrained} group testing \cite{harvey2007non, cheraghchi2012graph, karbasi2012sequential, spang2018unconstraining}, the tests must conform to a given network topology. For example, if the objective is to identify faulty links in a communication network by sending diagnostic packets, then each test must correspond to a valid path in the network. By contrast, our problem setup permits arbitrary tests, but we ask whether \textit{knowledge} of the interaction network can help reduce the number of tests.
	
	There is a rich literature on adaptive group testing dating back to the early work of Dorfman \cite{dorfman1943detection} and others \cite{sterrett1957detection, li1962sequential, finucan1964blood, hwang1972method, sobel1959group, sobel1966binomial, hu1981boundary, riccio2000sharper}, with several important results having emerged in recent years, e.g., \cite{allemann2013efficient}, \cite{baldassini2013capacity}, \cite{scarlett2018noisy}, \cite{scarlett2019efficient}, \cite{aldridge2019rates}, \cite{coja2021optimal}. These works focus on relatively simple combinatorial or probabilistic models.  A few prior works have departed from these standard models by assuming that infections occur independently with non-identical prior probabilities \cite{hwang1975generalized, li2014group, kealy2014capacity}. However, our paper pertains to the fully non-i.i.d. case in which infections can be correlated with potentially different priors, depending on the network structure.\footnote{For the sake of obtaining comprehensive results, we focus on a symmetric model in which infections are correlated and identically distributed. However, the general infection model that we propose in Section~\ref{subsec:SBM_equivalence} is fully non-i.i.d. } 
	
	The idea of community-aware group testing was first explored in \cite{nikolopoulos2020community}, which assumed the population is partitioned into disjoint ``families'' and that the disease spreads in two stages with independent infections at each stage. 
	Our work considers an infection mechanism which similarly operates in two stages but is designed to model the interaction-based transmissions by which diseases often spread in reality. %
	Finally, we would like to acknowledge a number of independent and concurrent works related to community-aware group testing \cite{nikolopoulos2020group, bertolotti2020network, arasli2020group, goenka2020contact, lin2020positively}.
	
	\paragraph{Notation}
	Let $[n] \triangleq \{1,2,\ldots, n\}$. We denote by $n, k$, and $m \triangleq \frac{n}{k}$ the size of the population, size of each community, and number of communities, respectively. $X \triangleq (X_1,\ldots, X_n) \in \{0,1\}^n $ is the infection status vector, where $X_i = 1$ iff the $i^\text{th}$ individual is infected. With a slight abuse of notation, %
	let $X_{\calC_i} \in \{0,1\}, \, i \in [m],$ be the infection status of community $\calC_i$, where $X_{\calC_i} = 1$ iff $\exists i \in \calC_i : X_i = 1$. The indicator function for an event $\calA$ is given by $\mathbbm{1}_\calA$. The entropy of a discrete random variable and the binary entropy function (both in bits) are $H(\cdot)$ and $\mathsf{h}_\mathsf{b}(\cdot)$, respectively. We write $f(x) \prec g(x)$ to denote $f(x) = o(g(x))$, and $f(x) \preceq g(x)$ to denote $f(x) = O(g(x))$. %

	\paragraph{Paper Organization} The rest of this paper is organized as follows. 
	In Section~\ref{sec:background}, we provide background and preliminary results. In Section~\ref{sec:models}, we introduce the stochastic block infection model ($\msf{SBIM}$) and discuss its equivalence to a certain graph-based infection spread mechanism acting upon the stochastic block model. %
	In Section~\ref{sec:algs}, we discuss the main algorithms studied in this paper: binary splitting and our proposed community-aware algorithm. Section~\ref{sec:sbm} gives upper and lower bounds for adaptive group testing over the general SBIM, and Section~\ref{sec:disjoint} provides an in-depth treatment of the disjoint $k$-cliques model, which is a special case of the SBIM. 
	We then extend our algorithms and bounds to the noisy case in Section~\ref{sec:noisy}. Finally, we present the results of our numerical experiments in Section~\ref{sec:simulations}, and conclude in Section~\ref{sec:conclusion}. 
	All omitted proofs are given in the Appendix.

	\section{Background and Preliminary Results }\label{sec:background} 

    \subsection{The Group Testing Problem}
    In the group testing problem, a \emph{test} corresponds to a subset of individuals $\calS \subseteq [n]$. The test outcome is \emph{positive} if $X_i = 1$ for some $i \in \calS$; that is, if at least one member of $\calS$ is infected. Otherwise, the outcome is \emph{negative}. Equivalently, the outcome is a binary variable $Y \in \{0,1\}$ given by a Boolean \texttt{OR} operation over $\calS$:
    \begin{equation}\label{eqn:booleanOR}
        Y = \bigvee_{i \in \calS} X_i.
    \end{equation}
    A group testing algorithm describes how to select subsets $\calS_1,\ldots, \calS_T$ and---given the corresponding test outcomes $Y_1,\ldots, Y_T$---how to generate an estimate $\hat{X}$ of $X$. In \emph{adaptive} schemes, the subsets $\calS_t$ are chosen sequentially and are allowed to depend on the previous test outcomes. %
    In the first part of this paper, we assume that test outcomes are \emph{noiseless} (meaning the algorithm gets to observe $Y$ as given in (\ref{eqn:booleanOR})), and we require \emph{exact recovery} of $X_1,\ldots, X_n$ (i.e., zero error). 
    
    Subsequently, we consider a \emph{noisy} variant of the problem in which the test outcomes are given by
    \begin{equation}\label{eqn:noisy_model}
    Y = \Big(\bigvee_{i \in \calS} X_i \Big) \oplus \xi, 
    \end{equation}
    where $\xi \sim \msf{Bernoulli}(\rho)$ for some $\rho \in (0,\frac{1}{2})$, and $\oplus$ denotes modulo-2 addition. This is the widely-adopted \emph{symmetric noise model} \cite{chan2011non,scarlett2016phase,teo2022noisy}, and it is equivalent to passing each noiseless test outcome through a binary symmetric channel with crossover probability $\rho$. It is assumed that tests are subject to independent noise.  Due to the uncertainty in the test outcomes, we can no longer guarantee exact recovery of $X$. Instead, we seek to ensure a vanishing error probability $P_e \eqDef \Pr(\hat{X} \neq X)$, where the randomness is due to the infection statuses and the noisy test outcomes. 
    
    In our setting, the number of tests $T$ performed by an adaptive scheme is a random variable because it depends on the $X_i$, which are generated by our probabilistic $\msf{SBIM}$ model, as well as the (possibly noisy) test results. Our goal is to characterize the average complexity of adaptive schemes under the $\msf{SBIM}$ by providing both upper and lower bounds on $\EE[T]$.

\subsection{Information-Theoretic Lower Bounds} 
	
	A fundamental result in probabilistic group testing %
	is that \emph{any} adaptive algorithm which is guaranteed to identify all infected members of the population, assuming noiseless test results, requires a number of tests $T$ satisfying 
	\begin{equation}\label{eqn:entropy_LB}
	    \EE[T] \geq H(X_1,\ldots, X_n).
	\end{equation}
	This bound highlights the intimate connection between adaptive group testing and source coding. Indeed, to summarize a discussion from \cite{wolf1985born}, the outcomes of the adaptive tests can be viewed as a binary, variable-length source code for $X$. The lower bound then follows directly from existing results in data compression (e.g., \cite[Eqn.~5.38]{cover2006elements}). Equation (\ref{eqn:entropy_LB}) will serve as the point of departure for the lower bounds on $\EE[T]$ that we derive under the $\msf{SBIM}$ in the noiseless case. The key challenge will be to obtain good approximations to $H(X)$ in the presence of correlations induced by the underlying network.

	For the noisy setting, we prove the following counterpart to \eqref{eqn:entropy_LB}. This lower bound holds for any adaptive scheme and any underlying stochastic infection model, including those with correlations. We provide the proof in Section~\ref{sec:noisy}.
	
    \begin{theorem}\label{thm:noisy_LB}
    Assume $H(X_1,X_2,\ldots, X_n) \to \infty$ as $n \to \infty$. Under the symmetric noise model \eqref{eqn:noisy_model}, any adaptive algorithm achieving $P_e \to 0$ must use an average number of tests lower bounded as 
    \begin{equation}
        \EE[T] \geq \frac{H(X_1,\ldots, X_n)}{I(\rho)},
    \end{equation}
    where $I(\rho) = 1-\msf{h}_\msf{b}(\rho) = 1 -  \rho \log_2 \frac{1}{\rho} - (1-\rho) \log\frac{1}{1-\rho}$ is the capacity of the binary symmetric channel with crossover probability $\rho \in (0, \frac{1}{2})$.
    \end{theorem}
    
    Note that our bound recovers the noiseless lower bound \eqref{eqn:entropy_LB} when $\rho=0$. 
    Moreover, in the special case of the combinatorial prior model where the number of infections $d$ is fixed and the set of infected members is uniformly distributed over the $\binom{n}{d}$ possibilities, our bound reduces to
    $\EE[T] \geq \frac{\log \binom{n}{d}}{I(\rho)}$.
    A version of this bound appears in \cite{teo2022noisy}, which does not prove it directly but argues it can be shown using an existing result from \cite{baldassini2013capacity} along with the variable-length coding capacity of the binary symmetric channel. In Section~\ref{sec:noisy}, we provide a stand-alone proof of the more general lower bound in Theorem~\ref{thm:noisy_LB} which encompasses all adaptive schemes and probabilistic infection models with symmetric testing noise. %
    Though we will primarily focus on the implications of Theorem~\ref{thm:noisy_LB} in the context of the $\msf{SBIM}$, we again emphasize that this result is independent of any particular infection model and thus can be of interest in its own right.

\section{ Stochastic Block Infection Model (SBIM)}\label{sec:models} 
In this section, we introduce the \emph{stochastic block infection model} (SBIM), which extends the traditional i.i.d. group testing model to a community-oriented setting. Here, individuals infect their fellow community members with a higher probability than those in other communities, giving rise to strongly correlated clusters of infections. We then describe a special case of the SBIM---the \emph{disjoint $k$-cliques model}---in which the communities are ``disconnected.'' Finally, we discuss the relationship between the SBIM and the stochastic block model (SBM), and conclude the section with some practical considerations. 

\subsection{General SBIM} 
Assume we are given a partition of the population of size $n$ into $m \eqDef n/k$ communities $\calC_1,\ldots, \calC_m$ of size $|\calC_i| = k, \, \forall i \in [m]$. The SBIM comprises the following two stages (each executed once):

	\begin{enumerate}
		\item \textbf{Seed Selection: } Individuals in the population are infected independently with probability $p \in (0,1]$. These initial infected members are called the \textit{seeds}. They model the introduction of the disease into the population via some external entity (e.g., a traveler carrying the disease into a country).  
		\item \textbf{Neighbor Infection: } Every seed infects its neighbors within the same community independently with probability $q_1 \in [0,1]$ and those outside its community independently with probability $q_2 \in [0,1]$, where $q_1 > q_2$. 
		This models how the disease spreads through the population via interactions between carriers and nearby individuals. Members of the same community are more likely to interact with each other within a given time frame (e.g., by interacting socially or professionally, or by being in the same physical space, e.g., a supermarket or a restaurant) and therefore more likely to infect each other than members of different communities.
	\end{enumerate}
	
We denote this model by $\mathsf{SBIM}(n,k,p,q_1,q_2)$. Note that $\msf{SBIM}(n,k,p,0,0)$, for any value of $k$, is equivalent to the i.i.d. group testing model with prior probability $p$. We assume the communities are known to the group testing algorithms in advance, but that nothing more is known about the specific interactions  between individuals.  %

The SBIM can be viewed as a model for the initial spread of an epidemic. It is motivated in part by diseases such as COVID-19, which are introduced into a population from an external source and subsequently transmitted between individuals in close contact. We also believe the SBIM can be a natural model for other application areas where group testing has played a role. For example, in the context of coding for multiple access sensor networks \cite{inan2018energy, inan2019group, inan2019sparse} it can capture the fact that sensors in close proximity can have correlated activity patterns and measurements.

\subsection{Special Case: Disjoint $k$-Cliques Model} After analyzing the $\mathsf{SBIM}$ in full generality in Section~\ref{sec:sbm}, we thoroughly investigate the special case of
$\mathsf{SBIM}(n,k,p,q,0)$, which we call the \textit{disjoint $k$-cliques model}, in Section~\ref{sec:disjoint}. Here, we have $m \eqDef n/k$ communities of size $k$, with seed selection probability $p$, intra-community transmission rate $q$, and an inter-community transmission rate of zero. Thus, the communities can be treated as independent, as no transmissions between communities are possible. We note that in this special case our model becomes similar (but not equivalent) to the disjoint families model introduced in \cite{nikolopoulos2020community}. We comment further on this in Section~\ref{sec:disjoint}. Figure~\ref{fig:network_infection} illustrates the $\mathsf{SBIM}(n,k,p,q_1,q_2)$ and contrasts the disjoint $k$-cliques model ($q_2 = 0$) with the general SBIM ($q_2 > 0$).

	\begin{figure}[t]
    \centering
    \begin{subfigure}[t]{0.35\textwidth}
    \includegraphics[width=\textwidth]{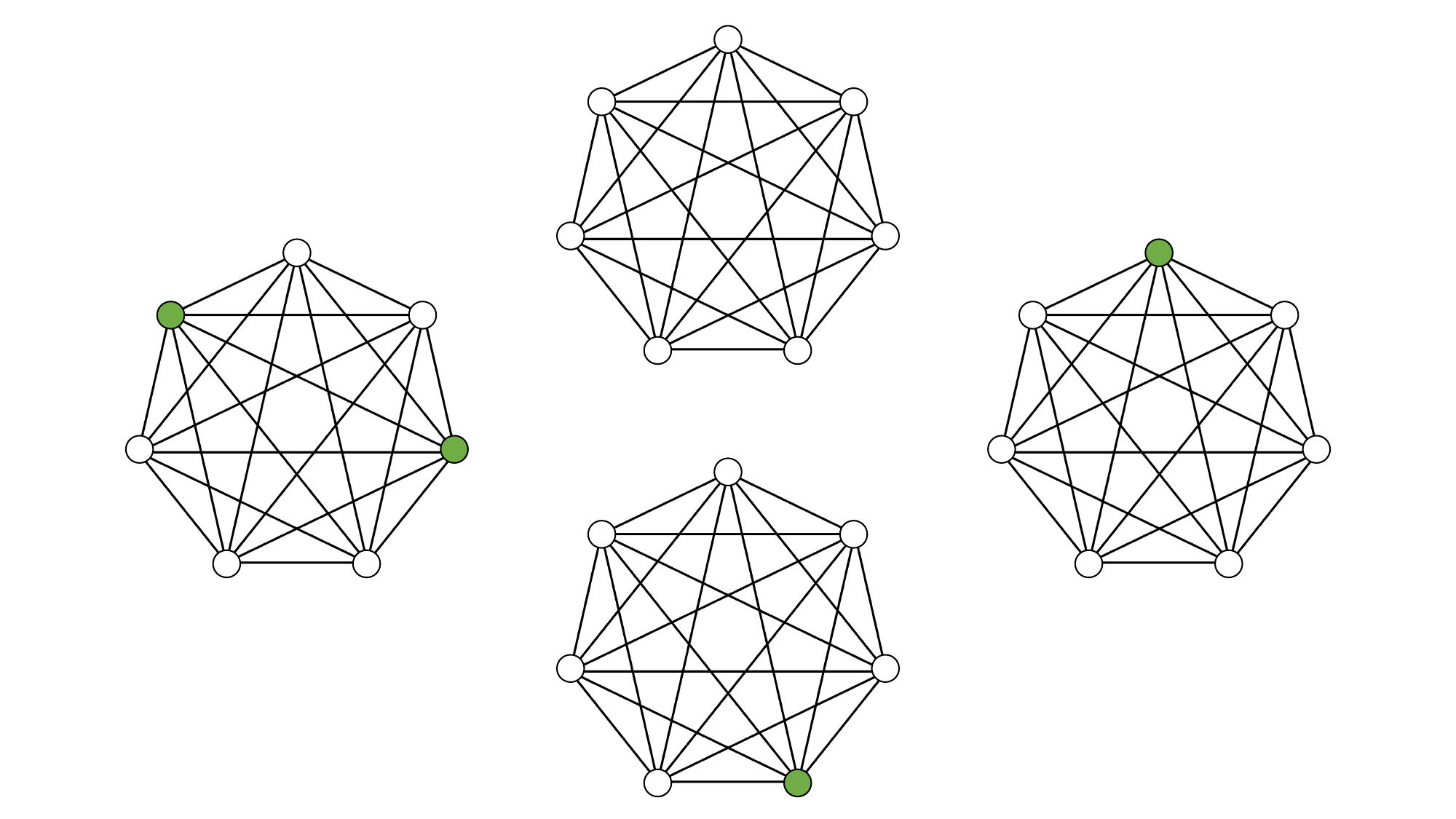}
    \caption{Seed selection stage}
    \end{subfigure}\\
    \begin{subfigure}[t]{0.35\textwidth}
    \includegraphics[width=\textwidth]{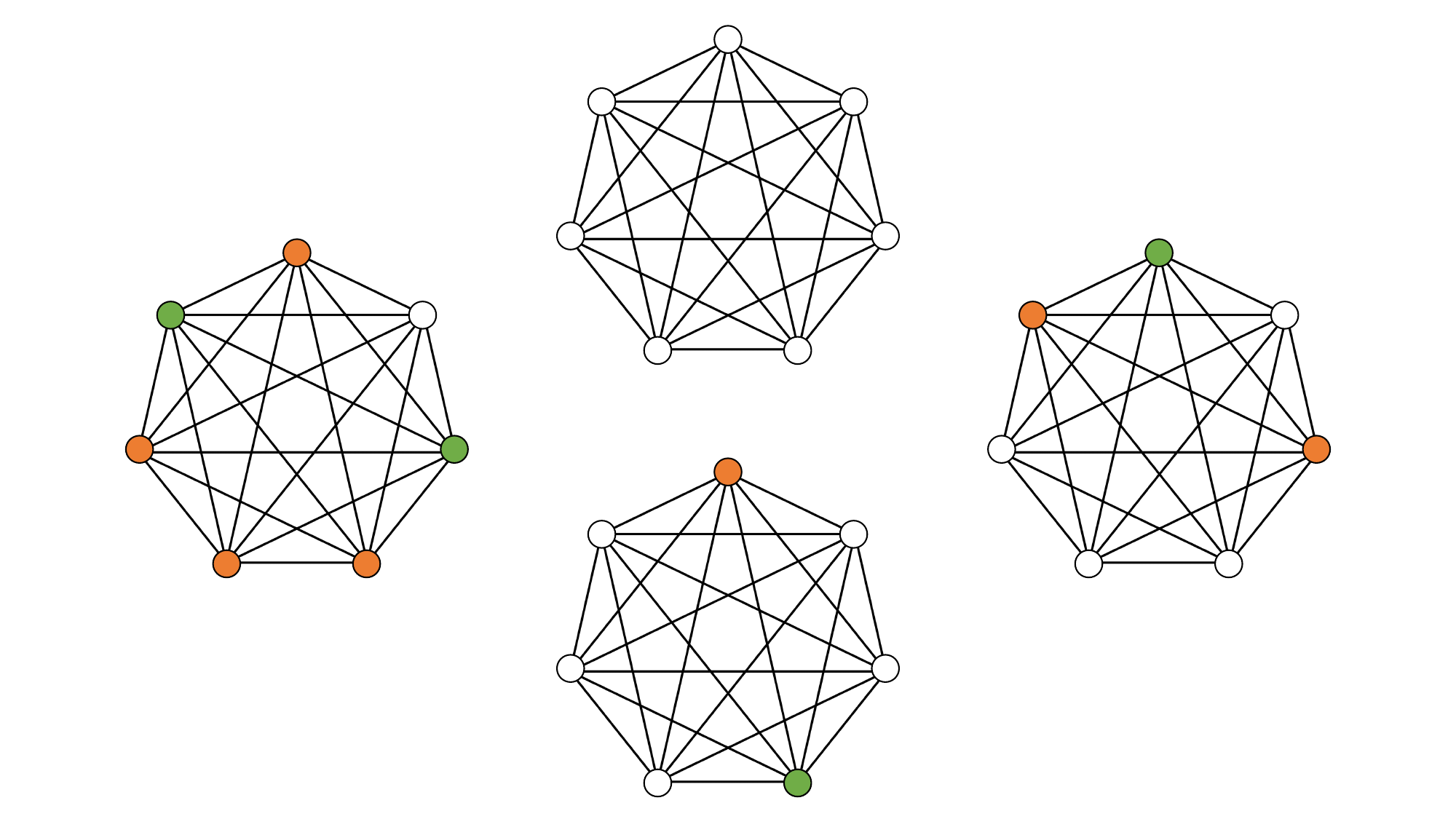}
    \caption{Neighbor infection with $q_2 = 0$ (the disjoint $k$-cliques model). Individuals cannot be infected by seeds outside their own community.} 
    \end{subfigure}\hspace{8em}
    \begin{subfigure}[t]{0.35\textwidth}
    \includegraphics[width=\textwidth]{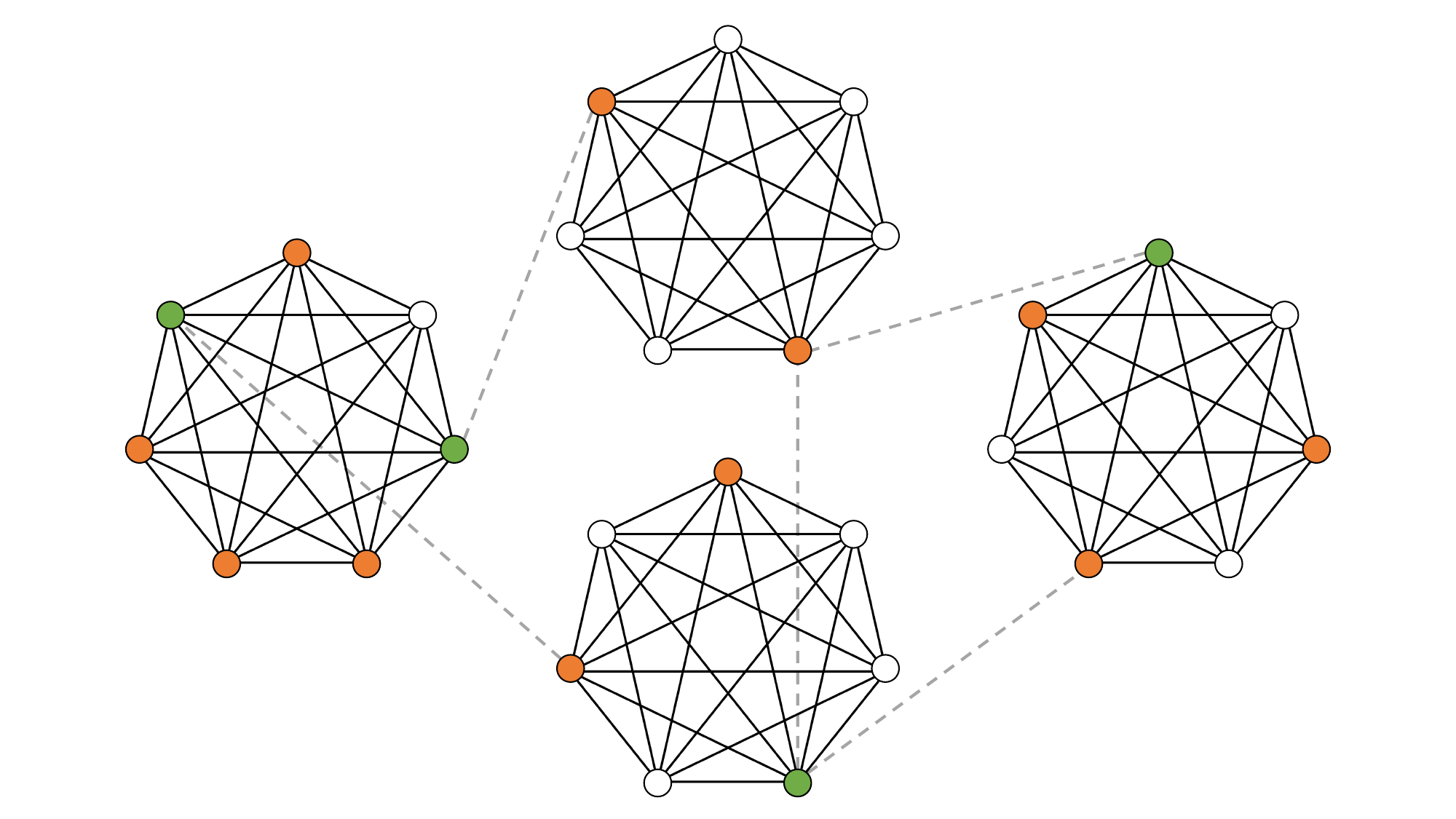}
    \caption{Neighbor infection with $q_2 > 0$ (the general SBIM). Any individual can be infected by any seed, even those in external communities.}
    \end{subfigure} 
    \caption{Illustration of $\mathsf{SBIM}(n,k,p,q_1,q_2)$. In this example, there are $m = 4$ communities of size $k = 7$. Seeds are colored green, and individuals infected by seeds during the neighbor infection stage are colored orange.} 
    \label{fig:network_infection}
\end{figure}
	
	\subsection{Relationship to the Stochastic Block Model (SBM)}\label{subsec:SBM_equivalence} 
	The SBIM is equivalent to a certain graph-based infection spread model operating upon the \emph{stochastic block model} (SBM) \cite{holland1983stochastic}---a well-known random graph model with the tendency to produce graphs with community structure. The standard SBM produces a random undirected graph $\calG = (\calV,\calE)$ as follows. (In our context, the vertices $\calV$ represent members of the population, and the edges $\calE$ can be thought of as representing an interaction (e.g., a social or professional interaction, being in proximity, etc.) between the two members of the population in a time frame of interest.) First, it is assumed the $n$ vertices are partitioned into $m$ communities, $\calC_1,\ldots, \calC_m$, where $\bigcup_{i \in [m]} \calC_i = \calV$ and $\calC_i \cap \calC_j = \emptyset, \, \forall i \neq j$. In addition, we are given a symmetric matrix $\mathbf{P} \in \RR^{m \times m}$ of edge probabilities. The graph is then generated by first initializing $\calE = \emptyset$, then adding an edge between each pair of vertices $u \in \calC_i, \, v \in \calC_j, \, u \neq v,$ with probability $\mathbf{P}_{ij}$. 
	
    Consider a special case of the SBM in which the communities are all of size $k$ (where $k$ is a factor of $n$), the edge probabilities within each community are constant ($p_1$), and the edge probabilities between communities are also constant ($p_2$, where $p_2 < p_1$, which models the assumption that members of the same community are more likely to have an interaction). That is, the diagonal entries of $\mathbf{P}$ are $p_1$, and the off-diagonal entries are $p_2$. %
    Additionally, consider the following probabilistic infection model with parameters $p,q \in [0,1]$ which operates upon an arbitrary graph $\calG = (\calV, \calE)$. First, the vertices are infected independently with probability $p$, producing the seeds $\calV_s \subseteq \calV$. Next, every seed $v \in \calV_s$ infects its neighbors $\calN(v) = \{u \in \calV \, : \, \{u,v\} \in \calE\}$ independently with probability $q$. This models the fact that if two members have an interaction, the disease is transmitted between them with a certain probability $q$. \footnote{This infection model forms the ``first time step'' of the independent cascade model \cite{kempe2003maximizing} from the study of influence maximization in social networks.} Note that this infection model reduces to the i.i.d. group testing model with prior $p$ when (i) $q = 0$, or (ii) $\calG$ is the empty graph ($\calE = \emptyset$). %
    Moreover, by setting $q_1 = p_1 \cdot q$ and $q_2 = p_2 \cdot q$, we see that the $\mathsf{SBIM}(n,k,p,q_1,q_2)$ is equivalent to this infection model operating upon the SBM. %

\subsection{Practical Considerations} 
	The ``communities'' within the SBIM can represent populations at different scales: counties, cities, schools, companies, etc. In practice, the specific values of $p,q$ can be tailored to the disease in question (for example, by using contact tracing to estimate the infectiousness of the disease). 
	Lastly, when the communities are not known in advance, one might first estimate the network from data (e.g., contact tracing, mobile phone, or social network data), then run a graph clustering algorithm to identify communities in the network. %
	
	At the same time, we acknowledge the practical limitations of the SBIM. First, the symmetry of the model (e.g., the assumptions that every community has the same probability of containing a seed and that a given individual can be infected by a seed from any community) does not capture the reality that the transmissibility of a disease can vary from person to person depending on their habits (e.g., whether they practice social distancing or mask wearing). However, we still believe the SBIM is an important and natural ``first-order'' extension of the traditional i.i.d. group testing model (which has been studied for decades) to models of greater complexity and practical relevance, while still being analytically tractable. We note that some of the aforementioned issues can be incorporated through the study of the general graph infection model we introduced in Section~\ref{subsec:SBM_equivalence}, by assuming the matrix $\mathbf{P}$ has a more general structure, e.g., by allowing different edge probabilities within and/or between different blocks, and/or different block sizes. Studying the group testing problem in these more general settings is an exciting direction to pursue in future work.

	\section{Algorithms}\label{sec:algs} 

\subsection{Binary Splitting Algorithm}
    	
    Most adaptive group testing algorithms are based on the idea, first introduced by Sobel and Groll \cite{sobel1959group}, of recursively splitting the population until all infected members are  found. The most fundamental adaptive procedure is \textit{binary splitting}, which finds a single infected member at a time by repeatedly halving the population. %
    It works even when the number of infected members $d$ is unknown \cite{sobel1966binomial}, and is most effective in the sparse regime, $d = \Theta(n^\beta)$, where $\beta \in [0,1)$.  %
    We make extensive use of the following performance guarantee throughout this paper:\footnote{ It is well-known that Lemma~\ref{lem:binary_split} can be improved via Hwang's \emph{generalized binary splitting} algorithm \cite{hwang1972method} or Allemann's \emph{split and overlap} algorithm \cite{allemann2013efficient}.  However, in contrast to binary splitting, these methods require the number of infected individuals (or an upper bound on this quantity) to be known \emph{a priori}.
    }  %
    	
    \begin{lemma}\label{lem:binary_split} 
    	In a population of size $n$ with $d$ infected members---where $d$ is unknown---%
    	the binary splitting algorithm is guaranteed to identify all infected members using at most $d \ceil{\log_2 n} + d + 1 \leq d \log_2 n + 2d + 1$ tests.
    \end{lemma}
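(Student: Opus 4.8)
The plan is to prove the bound by analyzing the binary splitting algorithm as a procedure that locates infected individuals one at a time, each time at a logarithmic cost. The core subroutine is a binary search: given any set $\calS$ of size $|\calS|$ that is \emph{known to contain at least one infected individual} (because a group test on $\calS$ returned positive), one can isolate a single infected member using at most $\ceil{\log_2 |\calS|}$ tests. First I would establish this subroutine cost. At each step, split the current positive set into two halves and test one of them; its outcome (positive, or negative which forces the other half to be positive by the \texttt{OR} semantics of (\ref{eqn:booleanOR})) identifies a half of at most half the size that still contains an infected member. Since each test at least halves the candidate set, after $\ceil{\log_2 |\calS|}$ tests the set is a singleton, which must be the infected individual. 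I would carry out this bound carefully, noting that halving $n$ repeatedly down to size $1$ takes exactly $\ceil{\log_2 n}$ iterations.

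The second step is to bound the total number of such search phases by $\alpha$. The full algorithm repeatedly tests the entire remaining population; each time the result is positive, it runs the binary search above to extract one infected individual, which is then removed (or marked) before continuing. Each completed search phase therefore removes exactly one infected member from consideration, so there can be at most $\alpha$ such phases. Each phase operates on a set of size at most $n$, and hence costs at most $\ceil{\log_2 n}$ tests by the subroutine bound. This gives the claimed total of $\alpha \ceil{\log_2 n}$ tests.

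Finally, I would convert the ceiling into the stated linear-in-$\alpha$ form using $\ceil{\log_2 n} \leq \log_2 n + 1$, so that $\alpha \ceil{\log_2 n} \leq \alpha \log_2 n + \alpha$, matching the right-hand side of the lemma. I would also note that the guarantee holds regardless of whether $\alpha$ is known in advance: the algorithm discovers that it is done precisely when a test on the entire remaining (unresolved) population returns negative, certifying that no infected individuals remain, so no prior knowledge of $\alpha$ is needed.

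The main obstacle is being careful with the bookkeeping rather than any deep idea: one must account for the terminating negative tests (the tests on the whole population that confirm a positive remains, and the final test confirming none remain) without overcounting, and must verify that each binary search indeed terminates at a single isolated positive after at most $\ceil{\log_2 n}$ halvings even when the positive set is not a power of two. These are standard but need to be stated precisely; since the lemma only claims an upper bound of $\alpha \ceil{\log_2 n}$, charging each discovered infected individual to one logarithmic-cost search phase and absorbing the auxiliary confirming tests into that accounting suffices.
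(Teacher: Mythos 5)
Your overall route --- a binary-search subroutine that extracts one infected individual from a certified-positive set using at most $\lceil \log_2 n\rceil$ tests, invoked at most $\alpha$ times --- is the standard argument, and it is the right one to attempt: the paper itself gives no proof of this lemma, citing it from the literature (Du--Hwang, Baldassini et al., Aldridge--Johnson--Scarlett). The gap is in your final accounting. As you describe the algorithm, each search phase must be preceded by a test of the entire unresolved population (this is exactly what establishes the subroutine's precondition that the set is ``known to contain at least one infected individual''), and the algorithm terminates with one final negative such test. These certifying tests cannot be ``absorbed'': each of the $\alpha$ phases costs $1 + \lceil\log_2 n\rceil$ tests and termination costs one more, so what your argument actually proves is $T \leq \alpha\lceil\log_2 n\rceil + \alpha + 1$, not $T \leq \alpha\lceil\log_2 n\rceil$.

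Moreover, no cleverer bookkeeping can close this gap, because the literal bound is false for any zero-error scheme that does not know $\alpha$. Take $n=2$, $\alpha=1$: the claim is that one test suffices, but whichever single test is performed ($\{1\}$, $\{2\}$, or $\{1,2\}$), its outcome on the input $X=(1,0)$ is also consistent with a different infection vector ($(1,1)$, $(0,0)$, or $(0,1)$, respectively), so exact recovery after one test is impossible; a Kraft-inequality count over the decision-tree leaves gives the same contradiction whenever $n$ is a power of two. The bound $\alpha\lceil\log_2 n\rceil$ \emph{is} achievable when $\alpha$ is known in advance --- then every search may presume its pool is contaminated and the algorithm stops after the $\alpha$-th find, so no certifying tests are needed --- but your proof explicitly insists $\alpha$ is unknown, which is indeed the situation in this paper, where the number of infected is random. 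So the honest conclusion of your argument is $T \leq \alpha\lceil\log_2 n\rceil + \alpha + 1 \leq \alpha\log_2 n + 2\alpha + 1$: this is the $\alpha\log_2 n + O(\alpha)$ guarantee the cited references actually state, and the lemma as printed in the paper silently drops the lower-order term. The weaker bound suffices for every downstream use (Theorems~\ref{thm:k-clique_binary_split}, \ref{thm:k-clique_graphaware}, \ref{thm:sbm_binary_split}, and~\ref{thm:sbm_graphaware}), changing only additive lower-order terms in those bounds and nothing in the asymptotic conclusions.
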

    \begin{proof}
    The first step of binary splitting is to perform a single test on the entire population to check for the presence of an infected member. If the test is positive, an infected member is identified in a recursive fashion using at most $\ceil{\log_2 n}$ adaptive tests (see \cite{baldassini2013capacity},  \cite[p.24-25]{du2000combinatorial}, or \cite[Theorem~1.2]{aldridge2019group} for details of the proof). The infected individual is then removed from the population and the aforementioned steps are repeated until either no individuals remain, or a negative test is obtained in the first step. It is straightforward to see that $d+1$ tests are performed due to the first step (once per infected member, and again when no infections remain), and $d \ceil{\log_2 n}$ total tests are used to recursively identify all $d$ infected members.
    \end{proof}
    
     We treat binary splitting as the baseline algorithm in this paper due to its simplicity and its role as a key subroutine in many other adaptive procedures. 
    
    \subsection {Community-Aware Algorithm} 
    As an alternative to standard adaptive procedures such as binary splitting, we consider a simple two-stage scheme which leverages the community structure of the graph. Our scheme first treats the communities as ``meta-individuals'' by mixing the samples within each community and applying binary splitting to quickly identify those with at least one infected member. Subsequently, we run binary splitting again---this time within each infected community---to identify the infected individuals. Note that this procedure will recover the infection statuses of all members of the population with zero error, which follows from the fact that binary splitting achieves exact recovery.

\vspace{1em}  
    \noindent
	\fbox{ \parbox{0.97\textwidth}{
	\textbf{Adaptive Community-Aware Algorithm} 
	\begin{enumerate}
		\item Mix the samples within each community.
		\item Perform binary splitting on the mixed samples to determine which communities contain at least one infected member. 
		\item For each positive test from Step 2, perform binary splitting within the corresponding community to identify the infected members.
	\end{enumerate}
	}}
	\vspace{1em}

	Under what circumstances would we expect the community-aware algorithm to outperform binary splitting? Suppose the underlying model is $\mathsf{SBIM}(n,k,p,q_1,q_2)$. If the seed selection probability $p$ is small, then we expect only a few of the $m \eqDef n/k$ communities to contain a seed. Thus, after the neighbor infection stage, several of the communities are likely to contain no infected members at all, especially if $q_2$ is small. In Step~2 of the community-aware algorithm, we can efficiently rule out these uninfected communities from consideration. In Step~3, we need only perform group testing within each of the remaining communities (which contain at least one infected member). In contrast, the binary splitting algorithm ignores the community structure (specifically, the fact that entire communities are likely to be uninfected), and is therefore unlikely to enjoy the same benefits as the community-aware algorithm under these circumstances. We will rigorously verify this intuition in the upcoming sections.
    
    \section{Bounds for the SBIM}\label{sec:sbm} 
In this section, we derive general lower and upper bounds on the average complexity of adaptive group testing over the $\mathsf{SBIM}(n,k,p,q_1,q_2)$. %
As we saw in the previous section, the community-aware algorithm is a simple extension of the binary splitting algorithm to a community-oriented setting. From a technical perspective, our main contribution is a careful evaluation of the performance of these schemes for the $\msf{SBIM}$ model as well as  the system entropy $H(X)$, which is required to obtain meaningful lower bounds. We start with the lower bound. %

\subsection{Information-Theoretic Lower Bound}\label{subsec:sbm_LB}
Recall from \eqref{eqn:entropy_LB} that $\EE[T] \geq H(X)$ for any adaptive group testing algorithm which exactly identifies the infected individuals using $T$ tests. The following lemma gives both a general lower bound on $H(X)$ as well as an easier-to-compute bound in terms of two independent binomial random variables. The proof is in Appendix~\ref{app:cond_entropy_LB}.

\begin{lemma}\label{lemma:sbm_LB}
Let $X_1,...,X_n$ be the infection statuses generated from $\msf{SBIM}(n, k, p, q_1, q_2)$, as defined in Section~\ref{sec:models}, and let $S_i$ be the indicator variable of whether the $i^\text{th}$ individual is a seed. Then the number of tests $T$ required to identify the infected individuals is lower bounded as
\begin{align}
    \EE[T] \geq H(X_1,...,X_n) &\geq n \cdot I(X_1;S_1) + H(X_1,\ldots,X_n\,|\,S_1,\ldots,S_n) \label{eqn:full_LB} \\
    &\geq m \cdot \EE_{Z, Z'} \left[\left( k - Z\right)\cdot \mathsf{h}_\mathsf{b}\lp 1-(1-q_1)^Z\lp 1- q_2 \rp^{Z'}\rp\right], \label{eqn:cond_entropy_LB}
\end{align}
where $Z \sim \mathsf{Binom}\left( k, p\right)$ and $Z'\sim \msf{Binom}(n-k, p)$ are independent.
\end{lemma}

By leveraging the concentration of $Z$ and $Z'$ around their means, we obtain our first main result, which characterizes the asymptotic behavior of \eqref{eqn:cond_entropy_LB}. The proof is in Appendix~\ref{app:sbm_LB}.

\begin{theorem}[SBIM Lower Bound]\label{thm:sbm_LB}
Assume
\begin{enumerate}
    \item $n\cdot p\cdot q_2 \preceq 1$,
    \item $n\cdot p \succeq 1$,
    \item $k\cdot p\cdot q_1\preceq 1$,
    \item $q_1 \leq \frac{1}{\sqrt{2k\lp \log\lp \frac{1}{kp} \rp+1  \rp}}$.
\end{enumerate}
The number of tests $T$ needed to recover all infected members over $\mathsf{SBIM}(n,k,p,q_1,q_2)$ is lower bounded as
$$
\EE[T] \succeq m^2\cdot k^2 \cdot p \cdot q_2 \cdot \log\lp \frac{1}{n\cdot p\cdot q_2} \rp + m\cdot k^2 \cdot p\cdot q_1 \cdot  \log\lp \frac{1}{q_1 + n\cdot p \cdot q_2} \rp.
$$
\end{theorem} 

\begin{remark}

The upper bounds on $p, q_1$ and $q_2$ in Theorem~\ref{thm:sbm_LB} allow us to evaluate the lower bound in Lemma~\ref{lemma:sbm_LB} in a regime where the infected population is sparse enough. This is the relevant regime since group testing is known to improve upon individual testing when infections are sparse. However, the specific upper bounds we impose may be artifacts of our lower bounding technique and could potentially be loosened. 
\end{remark}

As we will see in Section~\ref{subsec:k-clique_LB}, a secondary lower bound under $\msf{SBIM}(n,k,p,q,0,0)$ (i.e., the disjoint $k$-cliques model) is given by $H(X_{\calC_1},\ldots, X_{\calC_m}) = \sum_{i \in [m]} H(X_{\calC_i})$, which leverages the fact that the community-level infection statuses $\lbp X_{\calC_1},...,X_{\calC_m} \rbp$ are mutually independent in this setting. This bound turns out to dominate when $kp \preceq m^{-\beta}$ for some fixed $\beta \in (0,1)$. It is difficult to obtain an analogous lower bound under the general SBIM since the $\lbp X_{\calC_1},...,X_{\calC_m} \rbp$ are no longer mutually independent when $q_2 > 0$. Therefore, we suspect that the lower bound given in Theorem~\ref{thm:sbm_LB} is not tight when $kp$ is small. Obtaining a tighter bound in this regime is an open problem.

\subsection{Algorithm Analysis}
To analyze binary splitting and the community-aware algorithm over the SBIM, we begin by characterizing the marginal probability that a given individual will be infected. The proof is in Appendix~\ref{app:sbm_marginal}. %

\begin{lemma}\label{lem:sbm_marginal} 
The marginal probability of infection for every individual under $\mathsf{SBIM}(n,k,p,q_1,q_2)$ is given by
\[\PP(X_v = 1) = 1 - (1-p) \cdot (1-p \cdot q_1)^{k-1} \cdot (1-p \cdot q_2)^{n-k}.\]
\end{lemma}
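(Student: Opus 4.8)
The plan is to compute $\PP(X_v = 0)$ directly and take the complement, mirroring the argument behind Lemma~\ref{lem:general_marginal} but now accounting for the two distinct transmission rates $q_1$ and $q_2$ of the $\msf{SBIM}$. First I would set up the underlying independent coin flips that generate the infection process: for each vertex $u$ let $S_u \sim \mathsf{Bernoulli}(p)$ indicate whether $u$ is a seed, and for each ordered pair $(u,v)$ with $u \neq v$ let $B_{u \to v}$ indicate that seed $u$ successfully transmits the disease to $v$, which is $\mathsf{Bernoulli}(q_1)$ when $u$ and $v$ lie in the same community and $\mathsf{Bernoulli}(q_2)$ otherwise. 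By the definition of the $\msf{SBIM}$ (i.i.d.\ seed selection followed by i.i.d.\ neighbor infection), all of the $S_u$ and $B_{u\to v}$ are mutually independent.

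Next I would express the event that $v$ escapes infection. A vertex $v$ is uninfected precisely when it is not a seed and no seed transmits the disease to it; that is, $\{X_v = 0\} = \{S_v = 0\} \cap \bigcap_{u \neq v}\{\,\neg(S_u = 1 \wedge B_{u \to v} = 1)\,\}$. The key observation is that these events depend on disjoint collections of the underlying Bernoulli variables --- namely $\{S_v\}$ for the first event and $\{S_u, B_{u\to v}\}$ for each $u \neq v$ --- so they are mutually independent and the probability factorizes. I would then evaluate each factor: $\PP(S_v = 0) = 1-p$, while $\PP(\neg(S_u = 1 \wedge B_{u\to v}=1)) = 1 - p\,q_1$ for the $k-1$ vertices $u$ in $v$'s own community and $1 - p\,q_2$ for the $n - k$ vertices outside it.

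Multiplying these factors gives $\PP(X_v = 0) = (1-p)(1-pq_1)^{k-1}(1-pq_2)^{n-k}$, and taking the complement yields the claimed expression for $\PP(X_v = 1)$. Since the counts $k-1$ and $n-k$ depend only on the community sizes and not on which vertex $v$ was chosen, the marginal is identical for every $v$, as asserted.

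The only genuinely delicate point is justifying the mutual independence in the factorization step: one must verify that the ``no transmission from $u$'' events across different $u$ share no common randomness, i.e., that the target's seed indicator $S_v$ appears only in the first factor and that each off-diagonal infection coin $B_{u\to v}$ appears in exactly one factor. Once this bookkeeping is in place, the remainder is a routine product of Bernoulli success probabilities, completely analogous to the degree-based formula of Lemma~\ref{lem:general_marginal} with the effective degree split into $k-1$ intra-community and $n-k$ inter-community contributions carrying rates $q_1$ and $q_2$, respectively.
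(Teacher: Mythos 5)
Your proposal is correct and takes essentially the same approach as the paper: both arguments compute the probability that $v$ escapes infection as the product $(1-p)\cdot(1-p\,q_1)^{k-1}\cdot(1-p\,q_2)^{n-k}$, splitting the potential infectors into the $k-1$ intra-community and $n-k$ inter-community vertices and using independence of the seed and transmission randomness. The only cosmetic difference is that the paper first conditions on $v$ not being a seed and then multiplies over potential infectors, whereas you factorize the joint escape event directly via the underlying Bernoulli coins; the computation is identical.
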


\subsubsection{Binary Splitting} 

The following result bounds the expected number of tests used by the binary splitting algorithm under the SBIM. 

\begin{theorem}[Binary Splitting Bound]\label{thm:sbm_binary_split} 
Under $\mathsf{SBIM}(n,k,p,q_1,q_2)$, the binary splitting algorithm identifies all infected individuals using $T$ tests, where 
\[\EE[T] \leq n \cdot (\log_2 n + 2) \cdot \Big(1 - (1-p) \cdot (1-p \cdot q_1)^{k-1} \cdot (1-p \cdot q_2)^{n-k}\Big) + 1.\]
\end{theorem}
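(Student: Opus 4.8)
The plan is to mirror the proof of Theorem~\ref{thm:k-clique_binary_split} essentially verbatim, the only substantive change being that the common marginal infection probability is now supplied by Lemma~\ref{lem:sbm_marginal} instead of Lemma~\ref{lem:general_marginal}. First I would let $K \triangleq \Sum{i=1}{n} X_i$ denote the (random) number of infected individuals. The crucial observation is that binary splitting is \emph{oblivious} to the network: its test count depends on the realized infection vector only through $K$. Concretely, by Lemma~\ref{lem:binary_split}, on any population of size $n$ with $K \geq 1$ infected members, binary splitting identifies all of them using at most $K\ceil{\log_2 n} \leq K(\log_2 n + 1)$ tests. This is a pointwise bound conditioned on the value of $K$, and it makes no use of independence, so the correlations that the $\msf{SBIM}$ induces across vertices are irrelevant at this stage.

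Next I would take expectations. Conditioning on $K$ and applying the pointwise bound gives
\[
\EE[T] \leq (\log_2 n + 1)\cdot \EE[K].
\]
It then remains only to evaluate $\EE[K]$. By linearity of expectation (which, again, requires no independence between the $X_i$), $\EE[K] = \Sum{i=1}{n}\PP(X_i = 1) = n\cdot r$, where by Lemma~\ref{lem:sbm_marginal} every vertex shares the common marginal $r \triangleq \PP(X_v = 1) = 1 - (1-p)(1-p q_1)^{k-1}(1-p q_2)^{n-k}$; this uniformity across $v$ follows from the exchangeability of the $\msf{SBIM}$ construction. Substituting yields exactly the claimed bound
\[
\EE[T] \leq n\cdot(\log_2 n + 1)\cdot\Big(1 - (1-p)(1-p q_1)^{k-1}(1-p q_2)^{n-k}\Big).
\]

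I do not anticipate any genuine analytic obstacle: the argument is a routine composition of the binary-splitting guarantee with a marginal-probability computation, and it specializes correctly to Theorem~\ref{thm:k-clique_binary_split} upon setting $q_1 = q$ and $q_2 = 0$, using $mk = n$ (so $\log_2 m + \log_2 k = \log_2 n$) and $(1-p q_2)^{n-k} = 1$. The only point meriting a sentence of care is the edge case $K = 0$, where Lemma~\ref{lem:binary_split} does not formally apply because binary splitting still performs a single test to certify the population is healthy; since $T \leq \max\{K,1\}\ceil{\log_2 n}$ holds almost surely, this contributes at most a lower-order $\PP(K=0)$ term that does not affect the order of the bound, and I would either absorb it or adopt the same convention used for the disjoint-cliques proof. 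The heart of the matter is simply that linearity of expectation lets the correlated dependence among vertices enter only through $\EE[K]$, which is governed entirely by the (structure-independent) marginal $r$.
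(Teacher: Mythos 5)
Your proposal is correct and follows essentially the same route as the paper's proof: compute $\EE[K]$ by linearity of expectation using the common marginal from Lemma~\ref{lem:sbm_marginal}, then invoke the binary splitting guarantee of Lemma~\ref{lem:binary_split}. Your additional remarks (obliviousness of binary splitting to correlations, the $K=0$ edge case) are sound refinements of details the paper leaves implicit, but they do not change the argument.
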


\begin{proof}
Let $K$ be the number of infected individuals. Then 
		\[\EE[K] = \EE\Big[\Sum{i=1}{n} X_i\Big] = \Sum{i=1}{n} \PP(X_i = 1) = n \cdot \Big(1 - (1-p) \cdot (1-p \cdot q_1)^{k-1} \cdot (1-p \cdot q_2)^{n-k}\Big)\]
	where %
	the last equality follows from Lemma~\ref{lem:sbm_marginal}. 
	Invoking Lemma~\ref{lem:binary_split} yields the result. 
\end{proof}

\begin{corollary}\label{cor:sbm_binary_split}
Under $\mathsf{SBIM}(n,k,p,q_1,q_2)$, the average complexity of binary splitting satisfies 
\[\EE[T] \preceq m\cdot k^2 \cdot p \cdot (\log m + \log k) \cdot \Big(\frac{1}{k} + q_1 + m \cdot q_2 + m\cdot k \cdot p^2 \cdot q_1 \cdot q_2\Big). \]
\end{corollary}
\begin{proof} 
Using the fact that $(1+x)^k \geq 1 + kx$ for $x \geq -1, \, k \geq 1$, we have 
\begin{align}
    \EE[T] &\preceq n \cdot \log n \cdot \Big(1 - (1-p)(1-k\cdot p \cdot q_1) \cdot (1-(n-k)\cdot p \cdot q_2)\Big) \nonumber \\
    &\leq n \cdot \log n \cdot \Big((n-k)\cdot p \cdot q_2 + k \cdot p \cdot q_1 + p + k\cdot (n-k)\cdot p^3 \cdot q_1 \cdot q_2\Big) \nonumber \\
    &\leq  m\cdot k^2 \cdot p \cdot (\log m + \log k) \cdot \Big(\frac{1}{k} + q_1 + m \cdot q_2 + m\cdot k \cdot p^2 \cdot q_1 \cdot q_2\Big).\label{eqn:sbim_bs_asymptotic}
\end{align} 
\end{proof}

\subsubsection{Community-Aware Algorithm} 
First, we provide a lemma needed to prove the upper bound for the community-aware algorithm. The proof is in Appendix~\ref{app:sbm_clique_marginal}.  %

\begin{lemma}\label{lem:sbm_clique_marginal} 
Let $X_{\calC_i} \in \{0,1\}, \, i \in [m],$ be the infection status of community $\calC_i$, where $X_{\calC_i} = 1$ iff there exists at least one infected member in $\calC_i$.
Under $\mathsf{SBIM}(n,k,p,q_1,q_2)$,
\[\PP(X_{\calC_1} = 1) = 1 - (1-p)^k \cdot \Bigg(1 - p\cdot \Big(1-(1-q_2)^k\Big)\Bigg)^{n-k}.\]
\end{lemma}

In Theorem~\ref{thm:sbm_graphaware} below (which is proved in Appendix~\ref{app:sbm_graphaware}), the two terms in the sum correspond, respectively, to the expected number of tests in Steps 2 and 3 of the community-aware algorithm. 

\begin{theorem}[Community-Aware Bound]\label{thm:sbm_graphaware}
Under $\mathsf{SBIM}(n,k,p,q_1,q_2)$, the community-aware algorithm identifies all infected individuals using $T$ tests, where 
\begin{align*}
\EE[T] \leq \frac{n}{k} \cdot &\Big(\log_2(n/k) + 3\Big) \cdot \Bigg(1 - (1-p)^k \cdot \Bigg(1 - p\cdot \Big(1-(1-q_2)^k\Big)\Bigg)^{n-k}\Bigg) \\
&+ n \cdot \Big(\log_2 k + 2\Big) \cdot \Big(1 - (1-p) \cdot (1-p \cdot q_1)^{k-1} \cdot (1-p \cdot q_2)^{n-k}\Big) + 1.
\end{align*} 
\end{theorem}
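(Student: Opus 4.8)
The plan is to follow the same two-phase decomposition used for Theorem~\ref{thm:k-clique_graphaware}, writing $T = T_2 + T_3$ where $T_2$ and $T_3$ count the tests performed in Steps~2 and~3 of the graph-aware algorithm, and then bounding $\EE[T_2]$ and $\EE[T_3]$ separately by combining Lemma~\ref{lem:binary_split} with linearity of expectation. The two marginal probabilities we need are already supplied by Lemma~\ref{lem:sbm_clique_marginal} (for Step~2) and Lemma~\ref{lem:sbm_marginal} (for Step~3). Crucially, because we only ever take expectations of sums of indicators, the correlations among the community infection statuses under the $\msf{SBIM}$ (which obstructed the lower bound in Theorem~\ref{cor:sbm_LB}) play no role here.

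For Step~2, let $N_{\mathrm{comm}} = \sum_{i=1}^m \mathbbm{1}_{X_{\calC_i}=1}$ be the number of infected communities, treated as meta-individuals. On the event $N_{\mathrm{comm}} \geq 1$, Lemma~\ref{lem:binary_split} applied to the $m$ meta-individuals gives $T_2 \leq N_{\mathrm{comm}} \lceil \log_2 m \rceil \leq N_{\mathrm{comm}} (\log_2 m + 1)$. Taking expectations and using the symmetry of the $\msf{SBIM}$ across communities gives $\EE[N_{\mathrm{comm}}] = m\,\PP(X_{\calC_1}=1)$, which Lemma~\ref{lem:sbm_clique_marginal} evaluates explicitly. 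Substituting $m = n/k$ then yields the first term of the claimed bound.

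For Step~3, the key observation is a summation identity. Within each community $\calC_i$ that tested positive (so its number of infected members $K_i \geq 1$), Lemma~\ref{lem:binary_split} gives at most $K_i \lceil \log_2 k \rceil$ tests; summing over the positive communities and noting $K_i = 0$ for the negative ones gives $T_3 \leq \lceil \log_2 k\rceil \sum_{i=1}^m K_i = \lceil \log_2 k \rceil \cdot N_{\mathrm{ind}}$, where $N_{\mathrm{ind}} = \sum_{v} X_v$ is the total number of infected individuals (the communities partition $\calV$). Hence $\EE[T_3] \leq (\log_2 k + 1)\, \EE[N_{\mathrm{ind}}] = (\log_2 k + 1)\, n\, \PP(X_v = 1)$, and Lemma~\ref{lem:sbm_marginal} supplies the second term. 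Adding the two bounds completes the argument.

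The only genuinely delicate point is the edge case $N_{\mathrm{comm}} = 0$ in Step~2, where Lemma~\ref{lem:binary_split} does not apply (it requires at least one defective) yet a single confirmatory test is still spent; this undercounts $\EE[T_2]$ by at most $(\log_2 m + 1)\,\PP(N_{\mathrm{comm}} = 0)$, a quantity I will either absorb into the additive ``$+1$'' slack of the $\lceil\cdot\rceil$ rounding or carry as a negligible additive constant. Step~3 has no analogous issue, since binary splitting there is only ever invoked on communities already known to contain a defective. I expect this bookkeeping to be the main (and essentially the only) obstacle; the rest is immediate from the already-proved marginal-probability lemmas and linearity of expectation.
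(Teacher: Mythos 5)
Your proposal is correct and follows essentially the same route as the paper: the paper's proof of Theorem~\ref{thm:sbm_graphaware} simply repeats the two-phase argument of Theorem~\ref{thm:k-clique_graphaware} (bound the Step-2 tests via Lemma~\ref{lem:binary_split} applied to the $m$ meta-individuals, and bound the Step-3 tests by the expected number of infected individuals times $\log_2 k + 1$), substituting Lemma~\ref{lem:sbm_clique_marginal} and Lemma~\ref{lem:sbm_marginal} for the two marginal probabilities exactly as you do. Your direct summation $T_3 \leq \lceil \log_2 k \rceil \sum_i K_i$ is a minor streamlining of the paper's equivalent conditional-expectation computation ($\EE[M \mid X_{\calC_1}=1] = k\,\PP(X_1=1)/\PP(X_{\calC_1}=1)$, where the conditioning probability cancels), and the zero-infected-communities edge case you flag is real but is glossed over by the paper's own proof as well.
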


\begin{corollary}\label{cor:sbm_graphaware} 
Under $\mathsf{SBIM}(n,k,p,q_1,q_2)$, the average complexity of the community-aware algorithm satisfies 
\begin{equation}\label{eqn:sbim_ga_asymptotic}
\EE[T] \preceq m  \cdot k \cdot p \cdot \log m \cdot \Big(1 +  m \cdot k \cdot q_2\Big)  + m \cdot k^2 \cdot p\cdot \log k \cdot \Big(\frac{1}{k} + q_1 + m \cdot q_2 + m \cdot k \cdot p^2 \cdot q_1 \cdot q_2\Big)
\end{equation} 
\end{corollary}
\begin{proof} 
Let $T_1$ and $T_2$ be the first and second terms in the Theorem~\ref{thm:sbm_graphaware} bound, respectively. 
Using the fact that $\lp 1-q_2\rp^k \geq 1-kq_2$, we have
$$ 1-p\cdot \lp1-(1-q_2)^k\rp \geq 1-p\cdot k \cdot q_2, $$
so 
\begin{align*}
    \EE[T_1] 
    & \preceq m\log m\cdot\lp 1-(1-p)^k \cdot\lp 1-p\lp1-(1-q_2)^k\rp\rp^{n-k}\rp \\
    & \preceq m\log m \cdot\lp 1- (1-p)^k \cdot \lp 1 - p\cdot k\cdot q_2 \rp^{n-k} \rp\\
    & \preceq m\log m \cdot \lp 1-(1-k\cdot p) \cdot (1- \lp n-k \rp \cdot p \cdot k \cdot q_2) \rp \\
    & \preceq m\log m \cdot\lp k \cdot p + n\cdot p \cdot k \cdot q_2\rp.
\end{align*}
We can then bound $\EE[T_2]$ by following the previous asymptotic analysis for binary splitting: 
\begin{align*}
    \EE[T_2] \preceq m \cdot k^2 \cdot p \cdot \log k \cdot \Big(\frac{1}{k} + q_1 + m \cdot q_2 + m \cdot k \cdot p^2 \cdot q_1 \cdot q_2\Big).
\end{align*}
\end{proof}

\subsection{Discussion}\label{subsec:sbm_discussion} 
 
Comparing \eqref{eqn:sbim_bs_asymptotic} and \eqref{eqn:sbim_ga_asymptotic} term-by-term, we see that the binary splitting bound has an extra additive factor of $mk^2pq_1\log m (1 + mkp^2 q_2)$ compared to the community-aware bound, implying that the community-aware algorithm is never worse (order-wise) than binary splitting. Furthermore, one can verify that when $q_1 \preceq 1/k$ and  $q_2 = 0$ (which includes the i.i.d. setting, where community structure has no bearing on the infection spread), the bounds are order-wise equivalent. %
This supports our intuition that knowledge of the community structure may not help when $q_1$ and $q_2$ are small, as the infection statuses of the individuals are ``mostly independent'' in this regime. 

In other regimes, the community-aware algorithm is asymptotically strictly better than binary splitting. %
The main takeaway from the following corollary is that the community-aware algorithm can potentially improve upon binary splitting when there are several moderately sized communities in the network, and the transmission rate within each community is significant. 

\begin{corollary}\label{cor:sbm_improvement}
If $\log m \succ \log k$, $kq_1 \succ 1$, and $1 \succeq m k q_2 $, then the community-aware algorithm's average complexity is asymptotically strictly better than binary splitting's by a factor of $\min\Big\{kq_1, \, \frac{\log m}{\log k}\Big\}$. 

If $\log m \succ \log k$, $kq_1 \succ 1$, $mkq_2 \succeq 1$, and $mkq_2 \prec kq_1 \preceq \frac{1}{p^2}$, then the improvement is a factor of $\min\Big\{\frac{q_1}{m\cdot q_2}, \frac{\log m }{\log k}\Big\}$.

\end{corollary}

\begin{proof} 
    Suppose $\log m \succ \log k$, $kq_1 \succ 1$, and $1 \succeq m k q_2 $.
    Binary splitting's average complexity \eqref{eqn:sbim_bs_asymptotic} becomes 
    $$m \cdot k^2 \cdot p \cdot q_1 \cdot \log m$$
    whereas the community-aware algorithm's average complexity is
    $$\max\Big\{ \underbrace{m \cdot k \cdot p \cdot \log m}_{\text{(a)}}, \quad \underbrace{m \cdot k^2 \cdot p \cdot q_1 \cdot \log k}_{\text{(b)}} \Big\}.$$
    Both (a) and (b) are strictly smaller than the binary splitting bound. We see that (a) saves a factor of $kq_1 \succ 1$, while (b) saves a factor of $\frac{\log m}{\log k} \succ 1$. Thus, the overall improvement is a factor of $\min\Big\{kq_1, \, \frac{\log m}{\log k}\Big\}$.
    
    Next, suppose $\log m \succ \log k$, $kq_1 \succ 1$, $mkq_2 \succeq 1$, and $mkq_2 \prec kq_1 \preceq \frac{1}{p^2}$. Binary splitting's average complexity is $m \cdot k^2 \cdot p \cdot q_1 \cdot \log m$ (same as before), and the community-aware algorithm's complexity becomes 
$$\max\Big\{m^2 \cdot k^2 \cdot p \cdot q_2 \cdot \log m, \quad  m\cdot k^2 \cdot p \cdot q_1 \cdot \log k \Big\},$$
which represents an improvement over binary splitting by a factor of $\min\Big\{\frac{q_1}{m\cdot q_2}, \frac{\log m }{\log k}\Big\} \succ 1$.
\end{proof}

In general, the lower bound given in Theorem~\ref{thm:sbm_LB} exhibits a gap to the upper bounds in Theorem~\ref{thm:sbm_binary_split} and Theorem~\ref{thm:sbm_graphaware}. However, we will see in the next section that this gap can be eliminated under certain assumptions.  
    
    \section{Bounds for the Disjoint $k$-Cliques Model}\label{sec:disjoint}
Having studied the SBIM in full generality, we now focus on the special case of $\mathsf{SBIM}(n,k,p,q,0)$. Here, the transmission rate within a community is $q \in [0,1]$, and no transmissions are possible between communities. This simplifying assumption allows us to obtain a tighter lower bound than in the general SBIM, and to further show that the community-aware algorithm is order-optimal in certain parameter regimes. 

This setting is conceptually similar to the disjoint families model from \cite{nikolopoulos2020community}. However, in their model, each member of an ``infected family'' is infected independently with a fixed probability, whereas the infection rate within a given community in our model depends on the number of seeds in the community, which in turn depends (probabilistically) on the size of the community. This models the realistic scenario where a larger  community has a larger probability of being ``infected,'' i.e., having some infected members. In addition, the state of a given member of an infected community is not independent of the states of the other members; an individual has a higher probability of being infected if there are more infected members in their community. This property of our model makes the derivation of lower bounds and the analysis of group testing schemes more intricate. %

\subsection{Information-Theoretic Lower Bound}\label{subsec:k-clique_LB}

We obtain the following lower bounds for adaptive group testing over the the disjoint $k$-cliques model.

\begin{lemma}\label{lemma:k-clique_LB}
Under the disjoint $k$-cliques model, the number of tests $T$ required to identify the infected individuals is lower bounded as
$$ \EE[T] \geq H(X_1,...,X_n) \geq m\cdot\EE_Z \left[\left( k - Z\right)\cdot \mathsf{h}_\mathsf{b}\left( 1-(1-q)^Z\right)\right], $$
where $Z \sim \mathsf{Binom}\left( k, p\right).$
\end{lemma}
\begin{proof}
Direct consequence of Lemma~\ref{lemma:sbm_LB}.
\end{proof}

Next is a technical lemma which characterizes the asymptotic behavior of Lemma~\ref{lemma:k-clique_LB} by leveraging the concentration of $Z$ around its mean, using similar techniques as in Theorem~\ref{thm:sbm_LB}.
The proof is in Appendix~\ref{app:k-clique_LB}.

\begin{lemma}\label{lem:k-clique_LB} 
Let $Z \sim \mathsf{Binom}\left( k, p\right)$ and assume $kp \preceq 1$ and $q \preceq \frac{1}{\sqrt{k}\cdot\sqrt{\log\left( \frac{1}{k\cdot p} \right)}}$. Then 
$$ \EE_Z \left[\left( k - Z\right)\cdot \mathsf{h}_\mathsf{b}\left( 1-(1-q)^Z\right)\right] \succeq k^2\cdot p\cdot q \cdot \lp\log k + \log\log\lp \frac{1}{k\cdot p} \rp\rp. $$
\end{lemma}

Upon combining Lemma~\ref{lemma:k-clique_LB} and Lemma~\ref{lem:k-clique_LB}, we see that the number of tests $T$ needed to recover all infected members in the disjoint $k$-cliques graph (in the specified parameter regime) is lower bounded as 
\begin{equation}\label{eqn:clique_entropy1}
\EE[T] \succeq m\cdot k^2\cdot p\cdot q \cdot \lp\log k + \log\log\lp \frac{1}{k p} \rp\rp.
\end{equation} 
Recall that $X_{\calC_i}$ is the indicator variable of whether community $\calC_i$ contains at least one infected member. A different lower bound is given by
\begin{align} 
\EE[T] \geq H(X_1,\ldots, X_n) \overset{\text{(a)}}{\geq} H(X_{\calC_1},\ldots, X_{\calC_m}) = m\cdot \mathsf{h}_\mathsf{b}\Big(1-(1-p)^k\Big) \overset{\text{(b)}}{\succeq} m \cdot k\cdot p \cdot \log_2(1/kp) \label{eqn:clique_entropy2}
\end{align}
where (a) uses the fact that $X_{\calC_1},\ldots, X_{\calC_m}$ are a function of $X_1,\ldots, X_n$, and (b) uses the fact that  $\mathsf{h}_\mathsf{b}\Big(1-(1-p)^k\Big) \succeq k\cdot p \cdot \log_2(1/kp)$ since $kp \preceq 1$.

In the following theorem, we summarize the refined lower bound obtained by combining \eqref{eqn:clique_entropy1} and \eqref{eqn:clique_entropy2}: 

\begin{theorem}[Disjoint $k$-Cliques Lower Bound]\label{thm:k-clique_LB} 
Assume $kp \preceq 1$ and $q\preceq \frac{1}{\sqrt{k\log\lp \frac{1}{kp} \rp}}$. Then under the disjoint $k$-cliques model, the expected number of tests required to identify the infected individuals is lower bounded as 
$$ \EE[T] \succeq \max\lbp m\cdot k^2\cdot p\cdot q \cdot \lp\log k + \log\log\lp \frac{1}{k \cdot p} \rp\rp, \quad  m \cdot k \cdot p \cdot \log\Big(\frac{1}{k \cdot p}\Big), \quad  1\rbp. $$
\end{theorem}

Recall that $q=0$ corresponds to i.i.d. group testing, in which case \eqref{eqn:entropy_LB} gives the lower bound $\EE[T] \geq n\cdot \mathsf{h}_\mathsf{b}(p) \geq np\log(1/p)$. On the other hand, substituting $q=0$ into Theorem~\ref{thm:k-clique_LB} yields $np\log(1/kp)$, which differs from the i.i.d. case by an additive factor of $np\log(1/k)$. In this special case, our bound can be seen as slightly suboptimal.
However, observe that when $q=0$, the disjoint $k$-cliques models are equivalent for \emph{all} values of $k$. This is because the community structure plays no role in the i.i.d. setting. Therefore, Theorem~\ref{thm:k-clique_LB} holds for any value of $k$ when $q=0$, and can  thus be maximized over $k$ to obtain the best-possible bound. The maximum occurs at $k=1$ (i.e., when every vertex is its own community), which recovers the i.i.d. lower bound of $np\log(1/p)$ as desired.

\subsection{Algorithm Analysis} 

\subsubsection{Binary Splitting} 
As a direct consequence of Theorem~\ref{thm:sbm_binary_split} and Corollary~\ref{cor:sbm_binary_split}, we obtain the following non-asymptotic and asymptotic upper bounds on the expected number of tests used by binary splitting under the disjoint $k$-cliques model.

\begin{corollary}

Under the disjoint $k$-cliques model, the binary splitting algorithm identifies all infected individuals using $T$ tests, where 
			\begin{align*} 
			\EE[T] &\leq  m\cdot k \cdot \Big(\log_2 m + \log_2 k + 2\Big) \cdot \Big(1- (1-p)(1-pq)^{k-1}\Big) + 1 \\
			&\preceq \lp m \cdot k^2 \cdot p  \cdot \lp \log_2 m + \log_2 k \rp  \cdot (1/k + q) \rp. 
			\end{align*} 
\end{corollary}

\subsubsection{Community-Aware Algorithm}

The following bounds are obtained as direct consequences of Theorem~\ref{thm:sbm_graphaware} and Corollary~\ref{cor:sbm_graphaware}. 

\begin{corollary}
Under the disjoint $k$-cliques model, the community-aware algorithm identifies all infected individuals using $T$ tests, where 
			\begin{align*} 
			\EE[T] &\leq m \cdot \Big(\log_2 m + 3 \Big) \cdot \Big(1-(1-p)^k\Big) + n \cdot \Big(\log_2 k + 2\Big) \cdot \Big(1- (1-p)(1-pq)^{k-1} \Big) + 1 \\
			&\preceq m\cdot k \cdot p \cdot \log m  + m \cdot k^2 \cdot p \cdot \Big(\frac{1}{k} + q\Big) \cdot  \log k. 
			\end{align*} 
\end{corollary}

\subsection{Discussion}\label{subsec:k-cliques_discussion}

We summarize the expected number of tests of binary splitting and the community-aware algorithm, as well as the information-theoretic lower bound, in Table~\ref{tbl:k-clique}.

\begin{table}[hbt!]
\centering 
\bgroup
\def\arraystretch{2}
\begin{tabular}{|c|c|}
\hline
    Binary splitting & $m \cdot k^2 \cdot p\cdot \lp \frac{1}{k} + q\rp \cdot \log m + m\cdot k^2 \cdot p\cdot \lp \frac{1}{k} + q\rp \cdot \log k$ \\
\hline
    Community-aware & $ m\cdot k \cdot p \cdot \log m  + m \cdot k^2 \cdot p \cdot \Big(\frac{1}{k} + q\Big) \cdot  \log k  $ \\
\hline
    Lower bound & $ m \cdot k \cdot p \cdot \log\Big(\frac{1}{kp}\Big) + m\cdot k^2\cdot p\cdot q \cdot \lp\log k + \log\log\lp \frac{1}{k p} \rp\rp+1 $ \\
\hline
\end{tabular} \egroup 
\caption{Upper and lower bounds on the expected number of tests in the disjoint $k$-cliques model.}
\label{tbl:k-clique}
\end{table}

If we compare the bounds for binary splitting and the community-aware algorithm term-by-term, we observe that the binary splitting bound has an extra additive factor of $mk^2pq\log m$. Thus, as with the general SBIM, the community-aware algorithm is never worse (order-wise) than binary splitting. 

Next, we discuss different parameter regimes where 1) the lower bound holds, 2) the community-aware algorithm is order-optimal (i.e., the lower bound is tight), and 3) the community-aware algorithm's average complexity is strictly better than binary splitting's.
As stated in Theorem~\ref{thm:k-clique_LB}, the lower bound holds when $kp \preceq 1$ and $q\preceq \frac{1}{\sqrt{k\log\lp \frac{1}{kp} \rp}}$. The next corollary specifies the regime where the community-aware algorithm is order-optimal:
\begin{corollary}\label{cor:order_optimality}
Under the disjoint $k$-cliques model, the community-aware algorithm is order-optimal under the following conditions:
\begin{enumerate}
    \item $kp \preceq m^{-\beta}$ for some fixed $\beta \in \lp 0,1 \rp$,
    \item $\frac{1}{k} \preceq q \preceq \frac{1}{\sqrt{k \log\lp \frac{1}{kp} \rp}}$.
\end{enumerate}
\end{corollary}
\begin{proof}
Plugging $\log\lp\frac{1}{kp}\rp \succeq \beta \log m$ %
into the lower bound and using the fact that $k \succeq \log \lp \frac{1}{kp} \rp$ from the second condition (which implies $\log k \succeq \log\log m$) yields 
\begin{align*}
    \EE[T] & \succeq m \cdot k \cdot p \cdot \log m + m\cdot k^2\cdot p\cdot q \cdot \lp\log k + \log\log m\rp+1\\
    & \succeq m  \cdot k \cdot p \cdot \log m + m\cdot k^2\cdot p\cdot q \cdot \log k,
\end{align*}
and applying $q \succeq 1/k$ to the bound for the community-aware algorithm yields 
$$ \EE\lb T \rb \preceq  m\cdot k \cdot p \cdot \log m + m \cdot k^2\cdot p \cdot q \cdot \log k. $$
\end{proof}
Finally, using Corollary~\ref{cor:sbm_improvement} from our discussion on the general SBIM, we specify the regime where the community-aware algorithm outperforms binary splitting:
\begin{corollary}\label{cor:k-clique_improvement} 
    Under the disjoint $k$-cliques model, if $\log m \succ \log k$ and $kq \succ 1$, then the community-aware algorithm's average complexity is asymptotically strictly better than binary splitting's by a factor of $\min\Big\{kq, \, \frac{\log m}{\log k}\Big\}$.
\end{corollary}

\begin{remark}
Recall from Corollary~\ref{cor:sbm_improvement} that when $\log m \succ \log k$, $kq_1 \succ 1$, and $mkq_2 \preceq 1$ under the general SBIM, the improvement factor is also $\min\Big\{kq_1, \, \frac{\log m}{\log k}\Big\}$, matching Corollary~\ref{cor:k-clique_improvement} above. Intuitively, this is because the SBIM resembles the disjoint $k$-cliques model when $q_2$ is very small. 
\end{remark}

In Table~\ref{tbl:k-clique_regimes}, we summarize the different parameter regimes discussed so far.

\begin{table}[hbt!]
\centering 
\bgroup
\def\arraystretch{2}
\begin{tabular}{|c|c|}
\hline
    Lower bound's conditions & $kp \preceq 1$ and $q\preceq \frac{1}{\sqrt{k\log\lp \frac{1}{kp} \rp}}$ \\
\hline
    Tightness conditions &  $kp \preceq m^{-\beta}$ and
    $1 \preceq kq \preceq \sqrt{k/\log\lp \frac{1}{kp} \rp}$\\
\hline
    Improvement conditions & $\log m \succ \log k$ and $kq \succ 1$ \\ %
\hline
\end{tabular} \egroup 
\caption{Parameter regimes of interest for the disjoint $k$-cliques model.}
\label{tbl:k-clique_regimes}
\end{table}

As with the general SBIM, we find that the community-aware algorithm potentially improves
upon binary splitting when (i) there are several moderately sized communities in the network, and (ii) the transmission rate within each clique is significant. Additionally, the community-aware algorithm is order-optimal when the seeds are sparse. %

    \section{Noisy Setting}\label{sec:noisy}

In this section, we develop noise-resilient analogues to our previously discussed algorithms, which we call \textit{noisy binary splitting} (NBS) and the \textit{noisy community-aware} (NCA) algorithm.  %
We obtain bounds on the algorithms' average complexity under a high-probability recovery criterion and find that the NCA algorithm offers the same improvement in testing efficiency compared to NBS as observed between the corresponding algorithms in the noiseless case. %
We also provide the proof of our lower bound (Theorem~\ref{thm:noisy_LB}) and discuss how this result implies the order-optimality of the NCA algorithm in the same parameter regimes as the noiseless community-aware algorithm.
Thus, broadly speaking, the presence of testing noise does not affect the relative gains of using a community-oriented approach. 

\subsection{Noisy Binary Splitting}
In \cite{teo2022noisy}, an algorithm called \textit{modified noisy binary search} (MNBS)\footnote{This is essentially the noisy binary search algorithm from \cite{ben2008bayesian} adapted to the group testing framework.} was used as a sub-routine of an adaptive procedure for the symmetric noise model \eqref{eqn:noisy_model}.
The MNBS algorithm is said to succeed if it identifies an infected member of the population when one exists, and otherwise outputs a special symbol $\phi$ when no infected members remain in the population. 
We will employ the MNBS algorithm in a black-box manner and utilize the following performance guarantee \cite[Lemma 2]{teo2022noisy}: 

\begin{lemma}[MNBS Guarantee]\label{lem:mnbs}
Under the symmetric noise model \eqref{eqn:noisy_model}, given any $\delta \in (0,1)$, the MNBS algorithm succeeds with probability at least $1-\delta$ while satisfying 
\[\EE[\text{\# tests}] \leq \frac{\log n}{I(\rho)} + O\Big(\log \frac{1}{\delta}\Big) + O\Big(\log \log n\Big),\]
where $I(\rho) = 1-\msf{h}_\msf{b}(\rho) = 1 -  \rho \log_2 \frac{1}{\rho} - (1-\rho) \log\frac{1}{1-\rho}$ is the capacity of the binary symmetric channel with crossover probability $\rho \in (0, \frac{1}{2})$.%
\end{lemma}
\begin{proof}
See \cite[Appendix A]{teo2022noisy}.
\end{proof}

We obtain a very simple  \textit{noisy binary splitting} (NBS) algorithm for the symmetric model via repeated applications of MNBS. Note that NBS does not take into account the community structure of the population inherent to $\msf{SBIM}(n,k,p,q_1,q_2)$. 

\vspace{1em}  
    \noindent
	\fbox{ \parbox{0.97\textwidth}{
	\textbf{Noisy Binary Splitting (NBS) Algorithm} 
\begin{enumerate}
\setcounter{enumi}{-1}
    \item Initialize $\hat{X} = \emptyset$; $\mathcal{P} = [n]$. 
    \item Run MNBS with error parameter $\delta \in (0,1)$ on $\mathcal{P}$. If the result is $\phi$, terminate and return $\hat{X}$. Otherwise, add the result to $\hat{X}$ and remove it from $\mathcal{P}$. Repeat Step 1.
\end{enumerate}
	}}
\vspace{1em} 

\begin{remark}
If binary search algorithms are developed for other noise models (e.g., dilution noise or Z-channel noise), then NBS (and consequently the NCA algorithm discussed later) can easily be adapted to these models by replacing the MNBS algorithm with the channel-specific algorithm. Likewise, to analyze the error probability and average number of tests, we need only replace the performance guarantees of the MNBS algorithm with those of the channel-specific algorithm. 
\end{remark}

Recall that our objective is to ensure a vanishing probability of error, defined as $P_e \eqDef \Pr(\hat{X} \neq X)$. %
We obtain the following bound on the average number of tests used by NBS under this recovery guarantee.
The proof closely follows that of \cite[Theorem~1]{teo2022noisy}, the main difference being the random nature of infections in our setting, in contrast to their assumption that the number of infections is fixed.

\begin{theorem}[NBS Bound]\label{thm:nbs}
Suppose the infections in the population are distributed according to the $\msf{SBIM}(n,k,p,q_1,q_2)$, and let
\[\mu \eqDef n \cdot \Big(1-(1-p)\cdot (1-p\cdot q_1)^{k-1}\cdot (1-p\cdot q_2)^{n-k}\Big)\]
denote the expected number of infected individuals. Under the symmetric noise model \eqref{eqn:noisy_model}, for any $\delta \in (0,1)$ such that $\delta \prec \frac{1}{\mu}$, the NBS algorithm achieves 
$P_e \leq \delta\mu \to 0$ and uses an expected number of tests satisfying 
\begin{equation}
    \EE[T] \preceq \frac{\mu \log n}{I(\rho)} + \mu \log \frac{1}{\delta}. %
\end{equation}
\end{theorem}
\begin{proof}
We first analyze the probability of error. Note that, as long as the MNBS algorithm always succeeds in Step 1, the NBS algorithm will produce the correct output (i.e., achieve $\hat{X} = X$) and do so using $D+1$ calls to the MNBS algorithm, where $D$ is the total number of infected individuals in the population. (The $+1$ is due to the final call to the algorithm when no infected members remain.) Let $\mathcal{E}_i$ denote the event that the $i^\text{th}$ call to the MNBS algorithm fails. Then, by conditioning on the number of infections and performing a union bound over the error events, we have 
\begin{align*}
    P_e \leq \Pr\Bigg(\bigcup_{i=1}^D \mathcal{E}_i\Bigg) &= \sum_{d=0}^n \Pr(D = d) \cdot \Pr\Bigg(\bigcup_{i=1}^D \mathcal{E}_i\, \Big\vert \, D=d\Bigg) \\
    &\leq \sum_{d=0}^n \Pr(D=d) \cdot d \delta  \\
    &= \delta \mu \, \to \, 0 \text{ (by assumption).}
\end{align*}

Next, we bound the average number of tests used by the algorithm. By Lemma~\ref{lem:mnbs}, the average number of tests performed during the first $D+1$ calls to the MNBS algorithm is 
\begin{equation}\label{eqn:nbs_success}
\Big(\mu +1\Big) \Bigg(\frac{\log n}{I(\rho)} + O\Big(\log \frac{1}{\delta} \Big) + O(\log\log n)\Bigg) \asymp \frac{\mu \log n}{I(\rho)} + \mu \log \frac{1}{\delta}.
\end{equation}
Note that the algorithm may make more than $D+1$ calls to MNBS if at least one of the calls fails. However, these additional tests do not affect the overall scaling in \eqref{eqn:nbs_success} for the following reasons. Since the tests are subject to independent noise, the calls to MNBS fail independently with probability $O(\delta)$. Hence, the number of failures encountered before the first success is distributed as a $\msf{Geometric}(\theta)$ random variable where $\theta = 1-O(\delta)$. The mean of this variable is $\frac{1-\theta}{\theta} = O(\delta)$, and it follows that $\EE[T]$ is within a multiplicative $1+O(\delta)$ factor of \eqref{eqn:nbs_success}.
\end{proof}

\begin{remark}\label{rmk:nbs}
Note that the bound in Corollary~\ref{cor:sbm_binary_split} (the average number of tests used by binary splitting in the noiseless case) can be expressed as $\mu \log n$. The bound in Theorem~\ref{thm:nbs} scales this inversely by $I(\rho)$ (the capacity of the binary symmetric channel) and includes an additive factor, $\mu \log \frac{1}{\delta}$, that depends on the desired error probability.

We further remark that Theorem~\ref{thm:nbs} applies to any underlying probabilistic infection model, not just the $\msf{SBIM}$. One just needs to replace $\mu$ in the theorem statement with the corresponding model-specific quantities.
\end{remark} 

\subsection{Noisy Community-Aware Algorithm}
Following the same template as the noiseless community-aware algorithm, our \textit{noisy community-aware} (NCA) algorithm applies NBS in two stages: first to identify communities with at least one infected member, and then to identify the infected individuals within each positive community from the first stage. 

\vspace{1em}  
    \noindent
	\fbox{ \parbox{0.97\textwidth}{
	\textbf{Noisy Community-Aware (NCA) Algorithm} 
\begin{enumerate}
    \item Mix the samples within each of the %
    communities %
    and run the NBS algorithm with error parameter $\delta' \in (0,1)$ on the mixed samples.
    \item Run the NBS algorithm with error parameter $\delta \in (0,1)$ within each community identified as positive in Step 1. 
\end{enumerate}
	}}
\vspace{1em} 

We now state and prove the performance guarantee of the NCA algorithm.

\begin{theorem}[NCA Bound]\label{thm:nca}
Suppose the infections in the population are distributed according to $\msf{SBIM}(n,k,p,q_1,q_2)$, and let
\begin{align*}
    &\mu \eqDef n \cdot \Big(1-(1-p)\cdot (1-p\cdot q_1)^{k-1}\cdot (1-p\cdot q_2)^{n-k}\Big) \\
    &\mu_M \eqDef m \cdot \Bigg(1-(1-p)^k \cdot \Big(1-p\cdot(1-(1-q_2)^k)\Big)^{n-k}\Bigg) 
\end{align*}
denote the expected number of infected individuals and expected number of infected communities, respectively.
Under the symmetric noise model \eqref{eqn:noisy_model}, for any $\delta, \delta' \in (0,1)$ such that $\delta' \prec \frac{1}{\mu_M}$ and $\delta \prec \frac{1}{\mu}$, the NCA algorithm achieves 
$P_e \leq \delta'\mu_M + \delta\mu \to 0$ and uses an expected number of tests satisfying 
\begin{equation}\label{eqn:nca}
    \EE[T] \preceq \frac{\mu_M \log m + \mu \log k}{I(\rho)} + \mu \log \frac{1}{\delta}.
\end{equation}
\end{theorem}
\begin{proof}
Provided that each step of the algorithm succeeds, the algorithm's final output will be correct. Moreover, the NBS algorithm will be executed once in Step 1 and $M$ times in Step 2, where $M$ is the number of infected communities. Let $\calE^{(1)}$ denote the event that the single call to NBS in Step 1 fails, and let $\calE^{(2)}_i$ denote the event that the $i^\text{th}$ call to NBS in Step 2 fails. Additionally, let $\mu_K$ denote the expected number of infected members in a single community conditioned on the event that the community contains at least one infected member. We have 
\begin{align}
    P_e &\leq \Pr\Bigg(\calE^{(1)} \cup \bigcup_{i=1}^M \calE^{(2)}_i\Bigg) \\
    &\leq \Pr(\calE^{(1)}) + \sum_{j=0}^m \Pr(M=j) \cdot \Pr\Bigg(\bigcup_{i=1}^M \calE^{(2)}_i \, \Big\vert \, M=j\Bigg) \label{eqn:nca_pf1} \\
    &\leq \delta' \cdot \mu_M + \sum_{j=0}^m \Pr(M=j)\cdot j\cdot \delta \cdot \mu_K \label{eqn:nca_pf2} \\
    &= \delta' \cdot \mu_M + \delta \cdot \mu_K \cdot \mu_M \\
    &= \delta'\cdot \mu_M + \delta \cdot \mu \, \to \, 0 \label{eqn:nca_pf3}
\end{align}
where \eqref{eqn:nca_pf1} applies a union bound and conditions on the number of infected communities, \eqref{eqn:nca_pf2} applies another union bound along with the NBS error guarantee from Theorem~\ref{thm:nbs}, and \eqref{eqn:nca_pf3} uses the identity $\mu = \mu_K \cdot \mu_M$. 

To obtain a bound on the average number of tests, we will apply Theorem~\ref{thm:nbs} twice and sum the results. First, we apply it with $\mu$ replaced by $\mu_M$, $n$ replaced by $m$, and $\delta$ replaced by $\delta'$ (corresponding to Step 1). Second, we apply it with $\mu$ replaced by $\mu_K$ and $n$ replaced by $k$ (corresponding to Step 2). %
The overall average number of tests performed during a single call to the NBS algorithm in Step 1 and $M$ calls to the algorithm in Step 2 is thus given by 
\begin{align*}
    \EE[T] &\preceq \frac{\mu_M \log m}{I(\rho)} + \mu_M \log \frac{1}{\delta'} + \mu_M \Bigg(\frac{\mu_K \log k}{I(\rho)} + \mu_K \log \frac{1}{\delta} \Bigg) \\
    &= \frac{\mu_M \log m}{I(\rho)} + \mu_M \log \frac{1}{\delta'} + \frac{\mu\log k}{I(\rho)} + \mu \log \frac{1}{\delta} \\
    &\preceq \frac{\mu_M \log m + \mu \log k}{I(\rho)} + \mu \log\frac{1}{\delta} 
\end{align*}
where the second line uses the identity $\mu = \mu_K \cdot \mu_M$ and the final line uses the fact that $\mu_M \log \frac{1}{\delta'} \preceq \mu \log \frac{1}{\delta}$. Again, we note that additional tests may be required if one or more of the calls to the NBS algorithm fails, but that the above scaling will remain intact. (See the argument in the proof of Theorem~\ref{thm:nbs}.)

\end{proof}

\begin{remark}
The bound in Corollary~\ref{cor:sbm_graphaware} (the community-aware algorithm in the noiseless case) can be expressed as $\mu_M \log m + \mu \log k$. Hence, as similarly discussed in Remark~\ref{rmk:nbs}, we find that Theorem~\ref{thm:nca} scales the noiseless bound inversely by $I(\rho)$ and includes an additive factor that depends on the desired error probability $\delta$.

It follows that the NCA algorithm is always at least as efficient (order-wise) as NBS (recall our discussion in Section~\ref{subsec:sbm_discussion}). Moreover, if $\delta \succeq k^{-\frac{1}{I(\rho)}}$ (which ensures that the $\mu \log \frac{1}{\delta}$ term does not dominate), then the NCA algorithm is strictly better than NBS under the same conditions we derived in 
Corollary~\ref{cor:sbm_improvement}. That is, the parameter regimes in which the community-aware algorithm improves upon binary splitting in the noiseless case are the same regimes in which the NCA algorithm improves upon NBS, provided that the desired error probability is not too stringent. 

Lastly, we emphasize the generality of Theorem~\ref{thm:nca} beyond the $\msf{SBIM}$. The result holds under any probabilistic infection model satisfying the following symmetry condition, which states that the expected number of infections in an infected community should be the same across all communities: 
\[\EE\Big[\sum_{\ell \in \calC_i} X_\ell \, | \, X_{\calC_i} = 1\Big] = \EE\Big[\sum_{\ell \in \calC_j} X_\ell \, | \, X_{\calC_j} = 1\Big], \quad \forall i,j \in [m].\]
Also, note that our result holds even when there are no ``communities'' in the population. In this case, we simply set $m = 1,\, k=n$ in \eqref{eqn:nca} and find that it reduces to the community-oblivious bound in Theorem~\ref{thm:nbs}. 
\end{remark}

\subsection{Proof of the Lower Bound (Theorem~\ref{thm:noisy_LB})} 
Before delving into the proof of Theorem~\ref{thm:noisy_LB}, we make a few remarks. Provided that $\delta \succeq k^{-1/I(\rho)}$, note that the NBS upper bound (Theorem~\ref{thm:nbs}) and the NCA upper bound (Theorem~\ref{thm:nca}) each scale their noiseless counterpart (Theorem~\ref{thm:sbm_binary_split} and Theorem~\ref{thm:sbm_graphaware}, respectively) by $I(\rho)$. Moreover, our lower bound in Theorem~\ref{thm:noisy_LB} also scales its noiseless counterpart \eqref{eqn:entropy_LB} by $I(\rho)$. Taken together, these results imply that the NCA algorithm is order-optimal under the same conditions as the noiseless community-aware algorithm (recall Corollary~\ref{cor:order_optimality}), as long as $\delta$ is not too small. Thus, the benefits of using a community-aware scheme persist in the presence of symmetric testing noise.

To prove our main result, we will leverage a slightly more general version of Fano's inequality given in the following lemma. %
\begin{lemma}\label{lem:fano}
For any estimator $\hat{X}$ such that $X \to Y \to \hat{X}$, with $P_e = \Pr(X \neq \hat{X})$, we have
\[H(X\, | \, Y) \leq 1 + P_e \cdot H(X).\]
\end{lemma}
\begin{proof}
The proof is a small modification to the proof of Fano's inequality \cite[Theorem~2.10.1]{cover2006elements}. First, we define the error random variable
\[ E = 
\begin{cases}
1 & \text{if } \hat{X} \neq X \\
0 & \text{if } \hat{X} = X. 
\end{cases}
\]
Then, we expand $H(X,E \, | \, \hat{X})$ using two applications of the chain rule for entropy: 
\begin{align*}
    H(X,E \, | \, \hat{X}) &= H(X \, | \, \hat{X}) + \underbrace{H(E\,|\,X,\hat{X})}_{=0} \\
    &= H(E\,|\,\hat{X}) + H(X\,|\,E,\hat{X}).
\end{align*}
Since conditioning reduces entropy, we have $H(E\,|\,\hat{X}) \leq H(E) = \mathsf{h}_\mathsf{b}(P_e) \leq 1.$ Therefore, we currently have
\[H(X\,|\,\hat{X}) \leq 1 + H(X\,|\,E,\hat{X})\]
and we can bound $H(X\,|\,E,\hat{X})$ as
\begin{align*}
    H(X\,|\,E,\hat{X}) &= \Pr(E=0)\cdot \underbrace{H(X\,|\,\hat{X},E=0)}_{=0} + \Pr(E=1)\cdot H(X\,|\,\hat{X}, E=1) \\
    &\leq P_e \cdot H(X)
\end{align*}
where the inequality follows from the fact that conditioning reduces entropy. Finally, the data processing inequality yields
\[H(X\,|\,Y) \leq H(X\,|\,\hat{X}) \leq 1 + P_e \cdot H(X)\]
as desired. 
\end{proof}

Now we prove our main result, which we restate below for convenience.

\newtheorem*{T1}{Theorem~\ref{thm:noisy_LB}}

\begin{T1}
    Assume $H(X_1,X_2,\ldots, X_n) \to \infty$ as $n \to \infty$. Under the symmetric noise model \eqref{eqn:noisy_model}, any adaptive algorithm achieving $P_e \to 0$ must use an average number of tests lower bounded as 
    \begin{equation}
        \EE[T] \geq \frac{H(X_1,\ldots, X_n)}{I(\rho)}.
    \end{equation}
\end{T1}

\begin{proof}
For $i=1,2,\dots, M$, where $M$ is the maximum number of tests allowed by the algorithm, let $\calS_i \subseteq [n]$ denote the subset of individuals included in the $i^\text{th}$ test. Further, let 
\[Z_i = \bigvee_{j \in \calS_i} X_j\] 
denote the \textit{noiseless} outcome of the $i^\text{th}$ test, and let  
\[Y_i = Z_i \oplus \xi, \quad \xi \sim \msf{Bernoulli}(\rho), \, \rho \in \Big(0,\frac{1}{2}\Big)\]
denote the corresponding noisy outcome. An adaptive algorithm applies a sequence of tests $\calS_1,\calS_2, \dots$ and observes their outcomes $Y_1, Y_2,\dots$, where $\calS_i$ can be chosen as a function of the previous test outcomes $Y_1,\dots, Y_{i-1}$. The algorithm terminates at a random time $T$ which is determined by the test outcomes up to the $T^\text{th}$ test, i.e., $Y_1,\dots, Y_T$. For $i > T$, we will write $\calS_i = \emptyset$, $Z_i = \emptyset$, and $Y_i = \emptyset$ to indicate that the algorithm has been terminated and no further tests will be performed. %

We proceed by expanding the mutual information between the infection statuses $X_1,\ldots, X_n$ and the noisy test outcomes $Y_1,\ldots, Y_M$ as follows: 
\begin{align}
    I(X_1,\ldots, X_n; Y_1,\ldots Y_M) &= H(Y_1,\ldots, Y_M) - H(Y_1, \ldots, Y_M \, | \, X_1,\ldots, X_n) \nonumber \\
    &= \sum_{i=1}^M H(Y_i \, | \, Y_1,\ldots, Y_{i-1}) - \sum_{i=1}^M H(Y_i \, | \, X_1,\ldots, X_n, Y_1,\ldots, Y_{i-1}).  \label{eqn:mutual_info}
\end{align}
For now, let us assume that the algorithm is deterministic and that the first test, $\calS_1$, is fixed. Note that $\calS_i$ can be recursively deduced from $Y_1,\ldots, Y_{i-1}$, and hence $\calS_i$ is a function of $Y_1,\ldots, Y_{i-1}$. Combining this with the fact that conditioning reduces entropy, we obtain  
\begin{equation}\label{eqn:cond_entropy0}
    H(Y_i \, | \, Y_1,\ldots, Y_{i-1}) = H(Y_i \, | \, \calS_i, Y_1,\ldots, Y_{i-1}) \leq H(Y_i \, | \, \calS_i). 
\end{equation}

Next, we have 
\begin{align}
    H(Y_i \, | \, X_1,\ldots, X_n, Y_1,\ldots, Y_{i-1}) &= H(Y_i \, | \, X_1,\ldots, X_n, Y_1,\ldots, Y_{i-1}, Z_i) \label{eqn:cond_entropy1}\\
    &= H(Y_i \, | \, Z_i) \label{eqn:cond_entropy2} 
\end{align}
where \eqref{eqn:cond_entropy1} uses the fact that %
$Z_i$ is a function of $\{X_1,\ldots, X_n, \calS_i\}$ and hence is a function of $\{X_1,\ldots, X_n, Y_1\ldots, Y_{i-1}\}$; and \eqref{eqn:cond_entropy2} uses the fact that $Y_i$ is conditionally independent of $X_1,\ldots,X_n$ and $Y_1,\ldots, Y_{i-1}$ given $Z_i$. 

Plugging \eqref{eqn:cond_entropy0} and \eqref{eqn:cond_entropy1} into \eqref{eqn:mutual_info} yields
\begin{align}
    I(X_1,\ldots, X_n; Y_1,\ldots, Y_M) &\leq \sum_{i=1}^M H(Y_i\,|\,\calS_i) - \sum_{i=1}^M H(Y_i \,|\, Z_i) \nonumber \\
    &= \sum_{i=1}^M \Pr(\calS_i \neq \emptyset) \cdot H(Y_i\,|\,\calS_i \neq \emptyset) - \sum_{i=1}^M \Pr(Z_i\neq \emptyset) \cdot \msf{h}_\msf{b}(\rho) \nonumber \\
    &\leq \sum_{i=1}^M \Pr(\calS_i \neq \emptyset) \cdot (1-\msf{h}_\msf{b}(\rho)) \label{eqn:mutual_info1} \\
    &= \sum_{i=1}^M \Pr(T \geq i) \cdot (1-\msf{h}_\msf{b}(\rho)) \nonumber \\
    &= \EE[T] \cdot (1-\msf{h}_\msf{b}(\rho)) = \EE[T] \cdot I(\rho) \label{eqn:mutual_info2}
\end{align}
where \eqref{eqn:mutual_info1} uses the fact that $H(Y_i\,|\,\calS_i\neq \emptyset) \leq 1$ since $Y_i$ is a binary random variable conditioned on $\calS_i \neq \emptyset$, and \eqref{eqn:mutual_info2} follows from the tail-sum formula for expectation. 

Now, combining \eqref{eqn:mutual_info2} and our modified Fano's inequality (Lemma~\ref{lem:fano}) with the fact that $H(X_1,\ldots, X_n) = I(X_1,\ldots, X_n; Y_1,\ldots, Y_M) + H(X_1,\ldots, X_n\,|\,Y_1,\ldots, Y_M)$ gives us
\[H(X_1,\ldots, X_n) \leq \EE[T]\cdot I(\rho) + P_e \cdot H(X_1,\ldots, X_n) + 1\]
which can be rearranged to 
\[P_e \geq 1 - \frac{\EE[T]\cdot I(\rho) + 1}{H(X_1,\ldots, X_n)}.\]
Finally, let $\varepsilon > 0$, and observe that if $\EE[T] \leq \frac{H(X_1,\ldots, X_n)}{I(\rho)}\cdot (1-\varepsilon)$, then $P_e \geq \varepsilon - \frac{1}{H(X_1,\ldots, X_n)} \to \varepsilon$ by the assumption that $H(X_1,\ldots, H_n) \to \infty$ as $n \to \infty$. 

We can easily extend our analysis to allow for randomized tests by assuming $\calS_i$ depends on some external randomness $\theta_i$, independent of the noise and the infection statuses. Thus, $\calS_i$ is a function of $\{Y_1,\ldots, Y_{i-1}, \theta_1,\ldots, \theta_{i-1}\}$. Then, using the shorthand notation $X = (X_1,\ldots, X_n)$, $Y = (Y_1,\ldots, Y_M)$, $\theta = (\theta_1,\ldots, \theta_M)$, we have 
\begin{align*} 
H(X) = H(X\,|\,\theta) &= I(X; Y \,|\, \theta) + H(X\,|\,Y,\theta) \\
&= H(Y \, | \, \theta) - H(Y \, | \, X, \theta) + H(X\,|\,Y,\theta). 
\end{align*} 
Note that 
\[H(X\,|\,Y, \theta) \leq H(X\,|\,Y) \leq 1 + P_e \cdot H(X).\]

Additionally, using the fact that $\calS_i$ is a function of $\{Y_1,\ldots, Y_{i-1}, \theta_1,\ldots, \theta_i\}$, it follows that 
\begin{align*}
    H(Y \, | \, \theta) &= \sum_{i=1}^M H(Y_i \, | \, Y_1,\ldots, Y_{i-1}, \theta) \\
    &= \sum_{i=1}^M H(Y_i \, | \, Y_1,\ldots, Y_{i-1}, \theta, \calS_i) \\
    &\leq \sum_{i=1}^M H(Y_i \,|\, \calS_i). 
\end{align*}

Finally, 
\begin{align*}
    H(Y \,|\,X, \theta) &= \sum_{i=1}^M H(Y_i \,|\, X, Y_1,\ldots Y_{i-1}, \theta) \\
    &= \sum_{i=1}^M H(Y_i \,|\, X, Y_1,\ldots, Y_{i-1}, \theta, Z_i) \\
    &= \sum_{i=1}^M H(Y_i\,|\,Z_i)
\end{align*}
and the remainder of the proof is exactly the same as before. 
\end{proof}
    
    \section{Numerical Experiments}\label{sec:simulations} 

\begin{figure}[t]
    \centering
    \begin{subfigure}[t]{0.45\textwidth}
    \includegraphics[width=\textwidth]{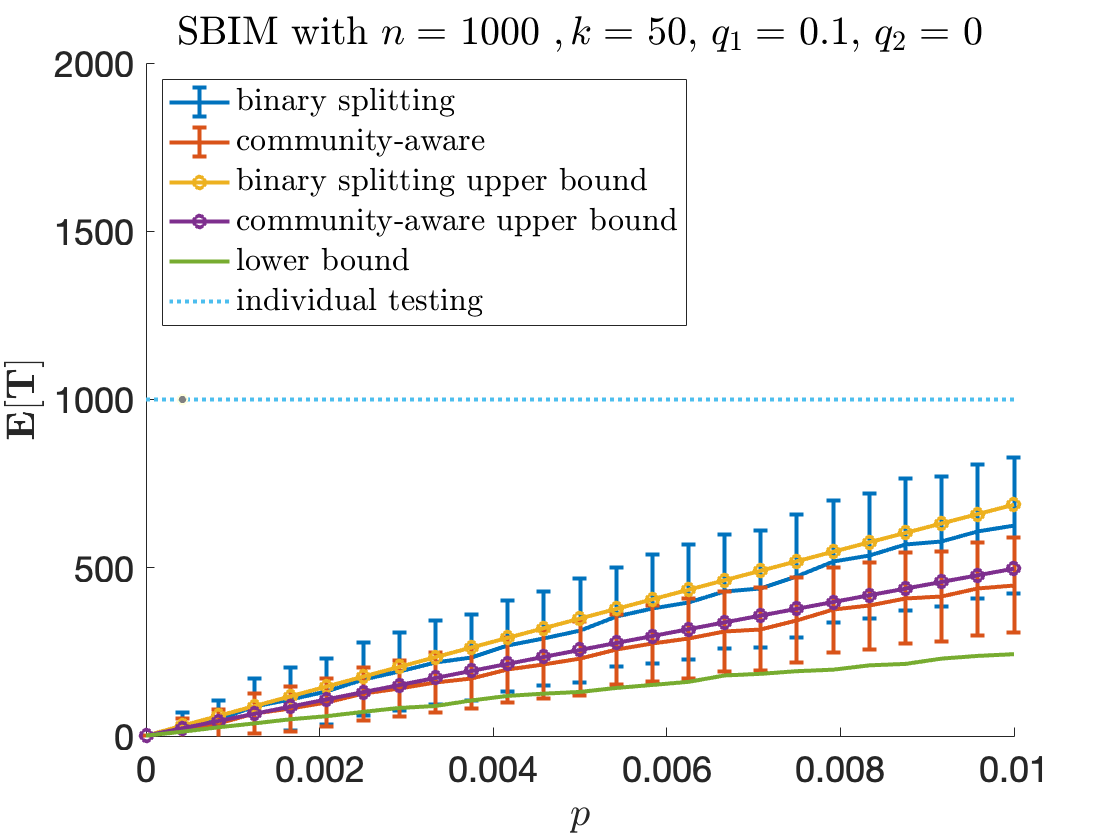}
    \caption{}
    \end{subfigure} 
    \begin{subfigure}[t]{0.45\textwidth}
    \includegraphics[width=\textwidth]{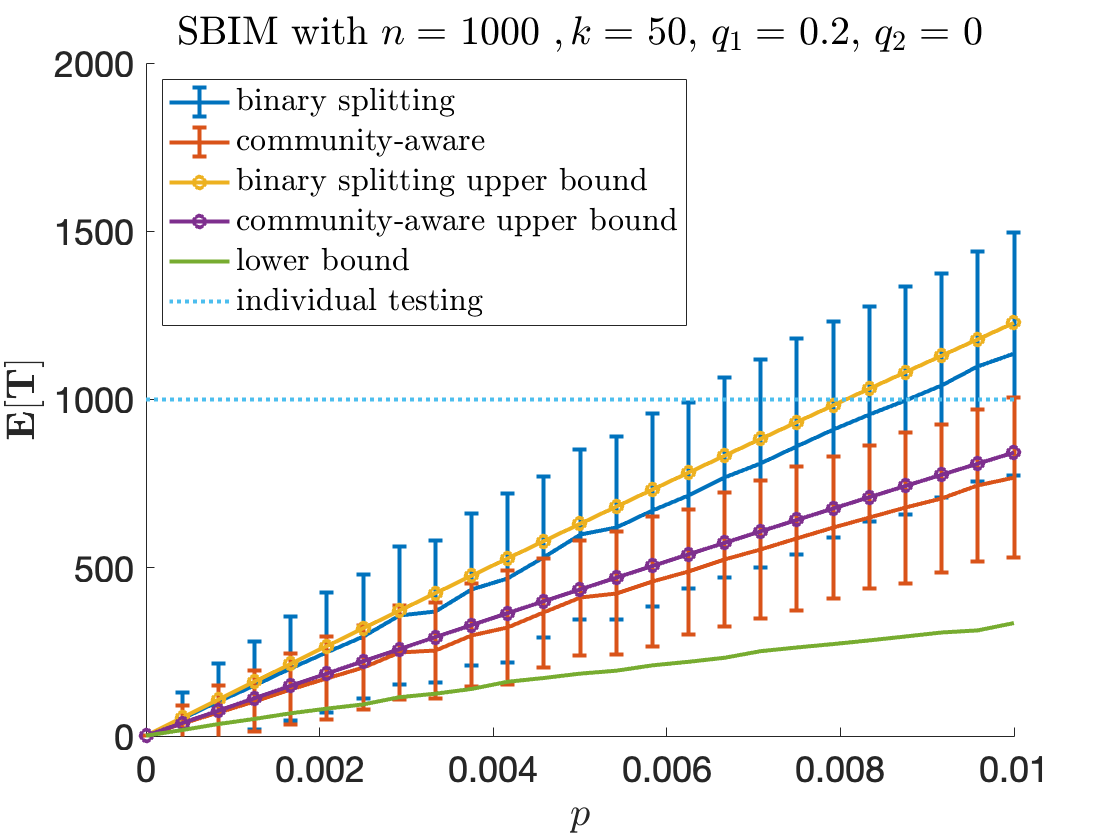}
    \caption{}
    \label{fig:sims1_b} 
    \end{subfigure} 
    \begin{subfigure}[t]{0.45\textwidth}
    \includegraphics[width=\textwidth]{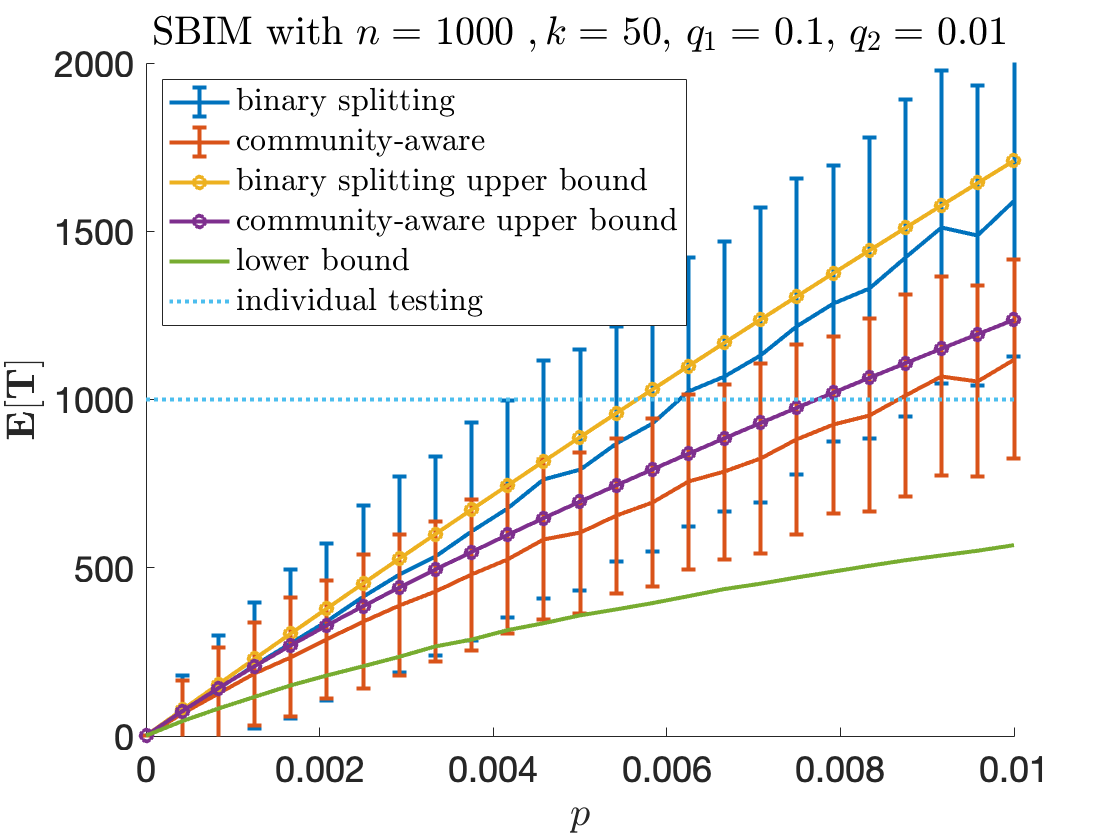}
    \caption{}
    \label{fig:sims1_c} 
    \end{subfigure}
    \begin{subfigure}[t]{0.45\textwidth}
    \includegraphics[width=\textwidth]{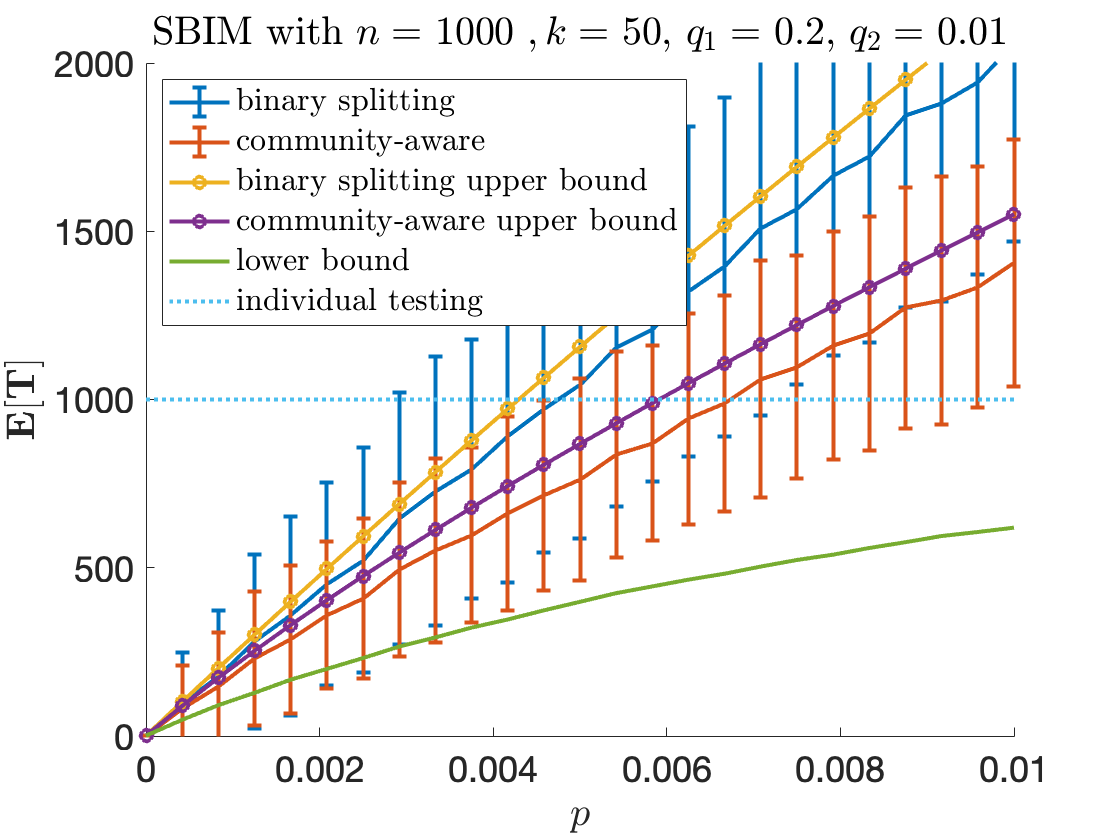}
    \caption{}
    \label{fig:sims1_d} 
    \end{subfigure}
    \caption{Performance comparison between binary splitting and the community-aware algorithm under the $\msf{SBIM}$ with $n = 1000, \, k = 50$, and different values of $p,q_1, q_2$. Theoretical upper and lower bounds are also shown. }
    \label{fig:sims1}
\end{figure}

\begin{figure}[t]
    \centering
    \begin{subfigure}[t]{0.45\textwidth}
    \includegraphics[width=\textwidth]{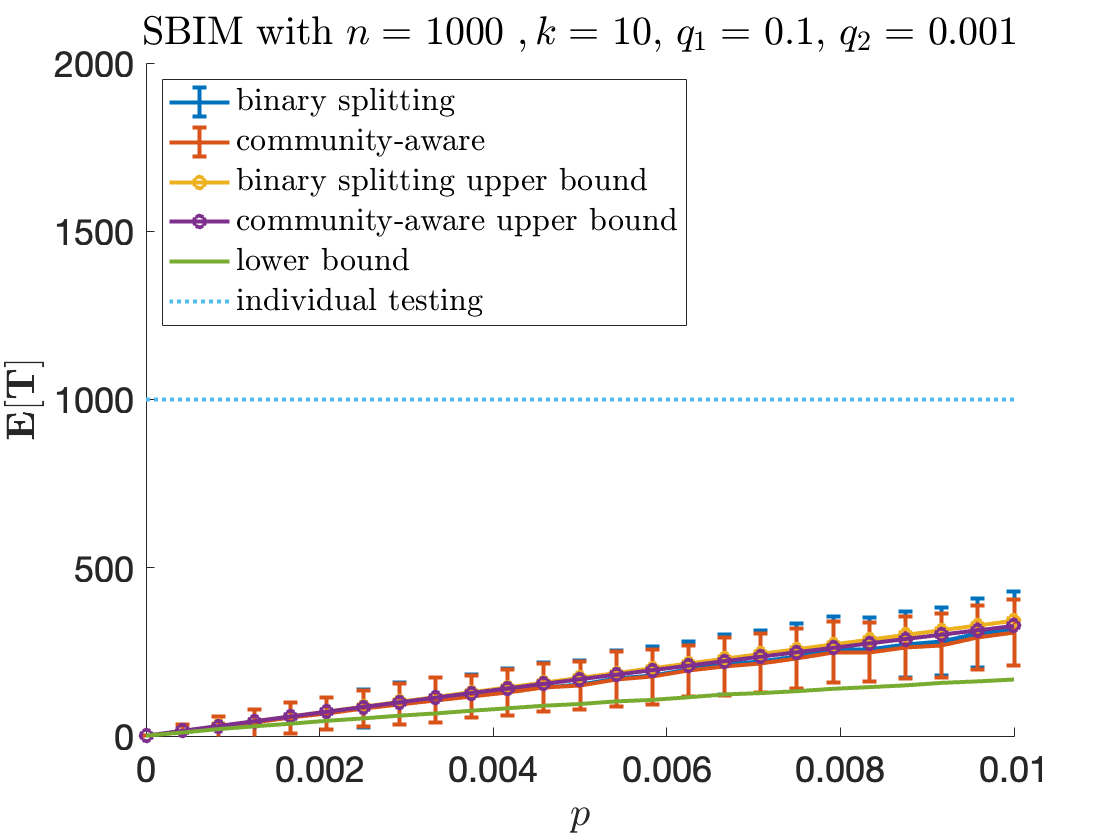}
    \caption{}
    \end{subfigure} 
    \begin{subfigure}[t]{0.45\textwidth}
    \includegraphics[width=\textwidth]{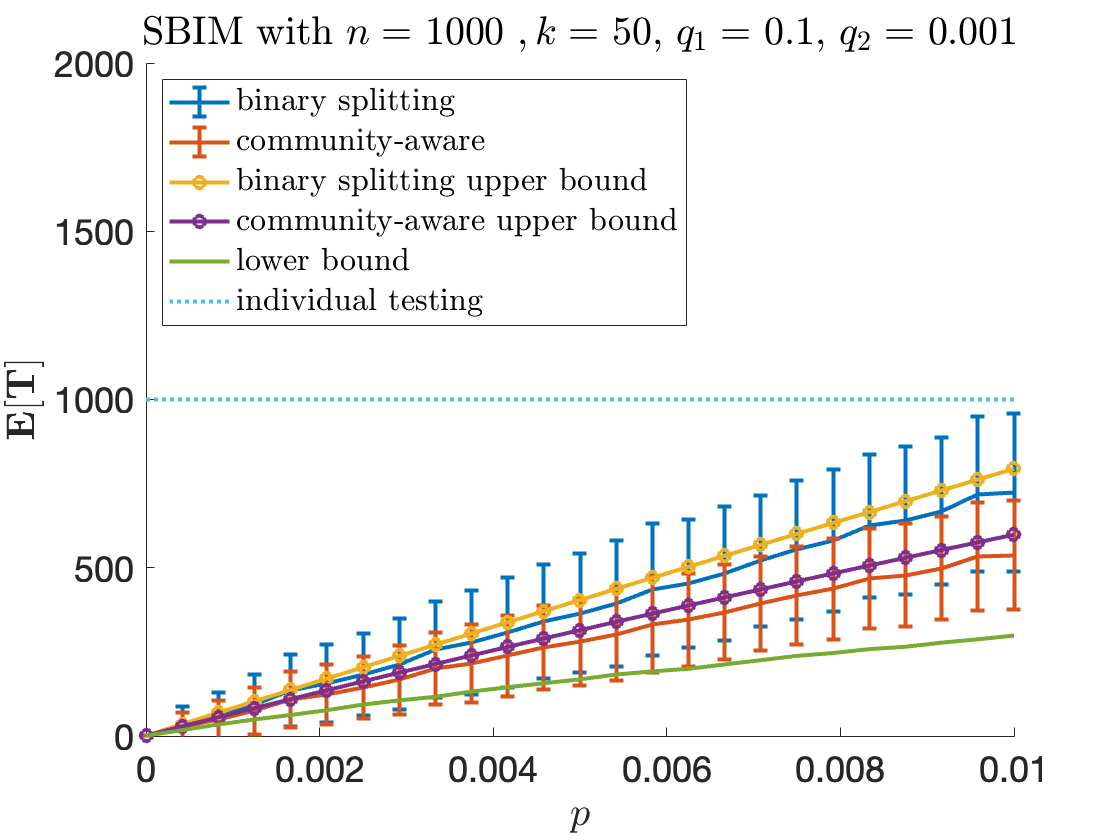}
    \caption{}
    \end{subfigure} 
    \begin{subfigure}[t]{0.45\textwidth}
    \includegraphics[width=\textwidth]{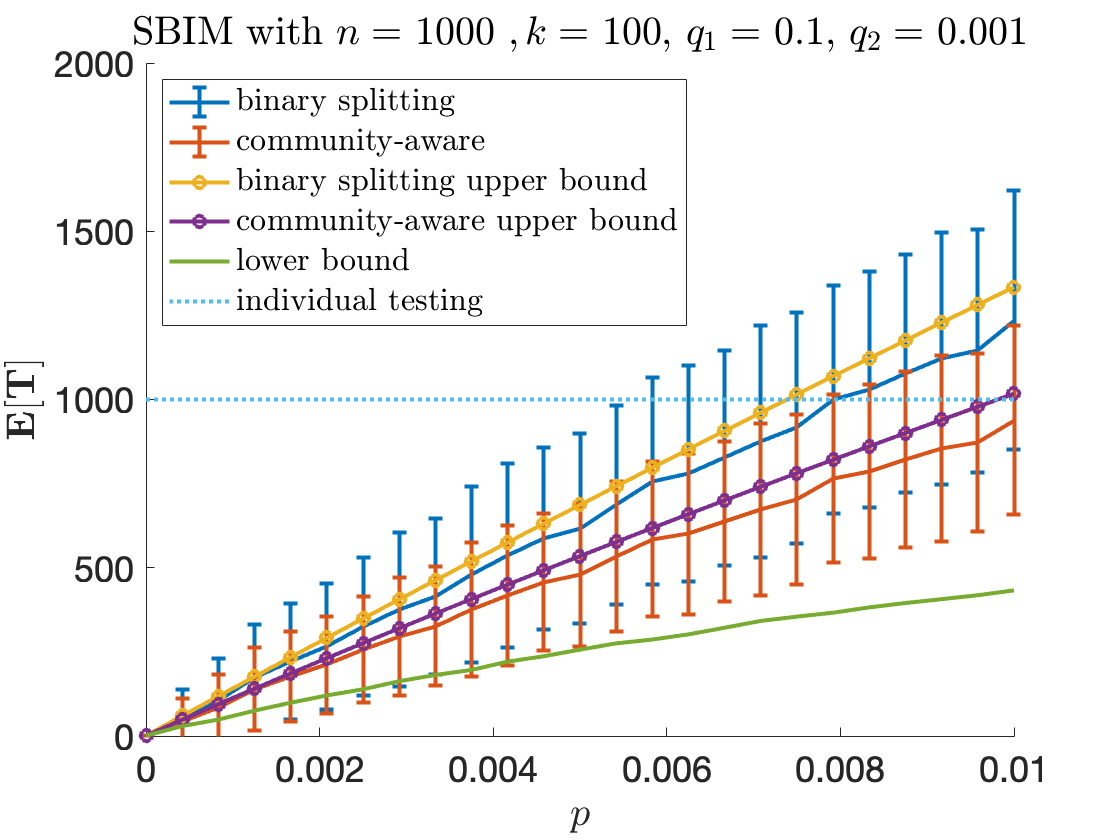}
    \caption{}
    \end{subfigure}
    \begin{subfigure}[t]{0.45\textwidth}
    \includegraphics[width=\textwidth]{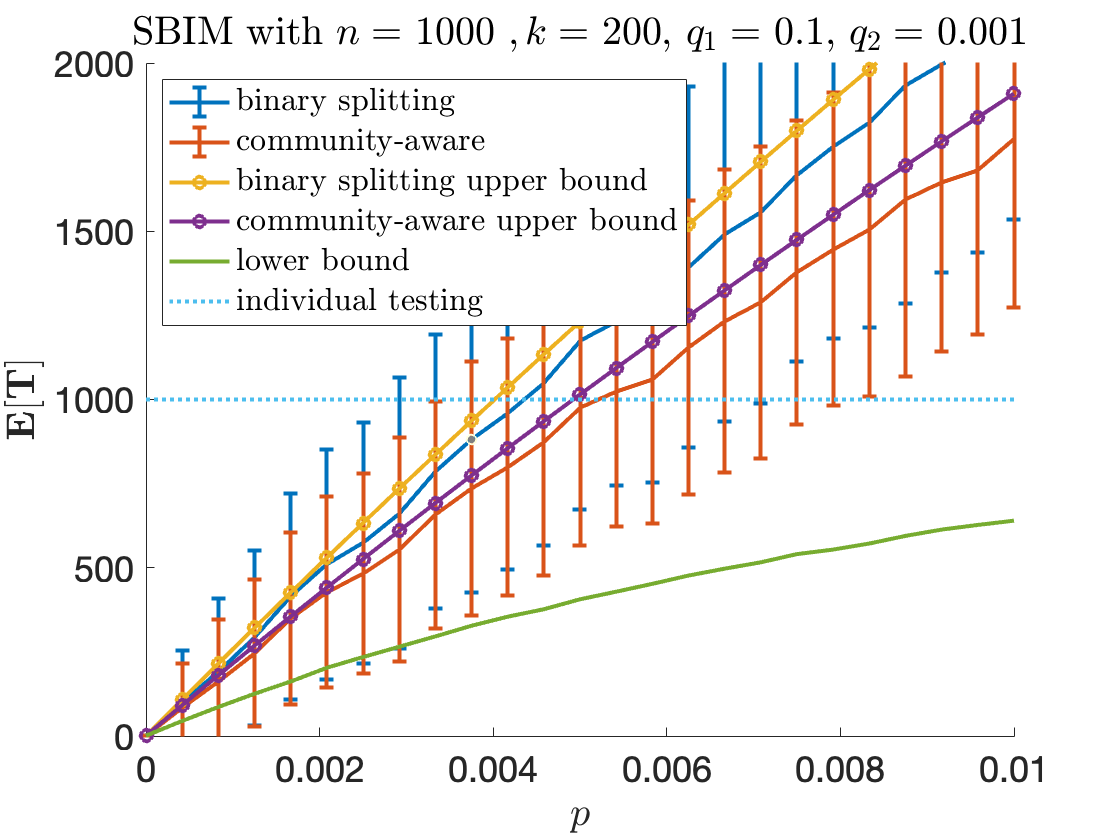}
    \caption{}
    \end{subfigure} 
    \caption{Performance comparison between binary splitting and the community-aware algorithm under the $\msf{SBIM}$ with $n = 1000, \, q_1 = 0.1, \, q_2 = 0.001$, and different values of $p, k$. Theoretical upper and lower bounds are also shown.}
    \label{fig:sims2}
\end{figure}

We implemented the binary splitting and community-aware algorithms and evaluated their performance over random instances of the SBIM. The population size was set to $n = 1{,}000$, and $p$ was varied over the interval $[0,\, 0.01]$, while $q_1,q_2$ were fixed at different values. We ran $500$ trials for each value of $p$, where a trial consists of generating an instance from $\mathsf{SBIM}(n,k,p,q_1,q_2)$ and observing the number of tests used by binary splitting and the community-aware algorithm to identify the infected individuals. For the lower bound, we computed the non-asymptotic expression given in \eqref{eqn:full_LB}. To do so, one can expand \eqref{eqn:full_LB} as follows:  
\begin{align*}
    n\cdot I(X_1;S_1) + H(X\,|\,S) &= n\cdot\Bigg(H(X_1) - H(X_1\,|\,S_1)\Bigg) + H(X\,|\,S) \\
    &= n\cdot\Bigg(H(X_1) - \PP(S_1=0)\cdot H(X_1\,|\,S_1=0)\Bigg) + H(X\,|\,S) \\
    &= n\cdot\Bigg(\mathsf{h}_\mathsf{b}\Big(\PP(X_1=1)\Big) - (1-p)\cdot \mathsf{h}_\mathsf{b}\Big(\PP(X_1=1\,|\,S_1=0)\Big)\Bigg) + H(X\,|\,S) 
\end{align*}
where $X=(X_1,\ldots,X_n), \, S=(S_1,\ldots,S_n)$. The terms $\PP(X_1=1)$ and $\PP(X_1=1\,|\,S_1=0)$ are straightforward to compute (see Lemma~\ref{lem:sbm_marginal} and its proof in Appendix~\ref{app:sbm_marginal}). The term $H(X\,|\,S)$ is lower-bounded by \eqref{eqn:cond_entropy_LB}. To estimate \eqref{eqn:cond_entropy_LB}, we took an average over many independent samples of $Z \sim \msf{Binom}(k,p)$ and $Z' \sim \msf{Binom}(n-k,\, p)$.

Figure~\ref{fig:sims1} shows some representative plots of the estimated $\EE[T]$ as a function of $p$, with $k = 50$ and different values of $q_1, \, q_2$. The error bars show $\pm$ one standard deviation of the values of $T$ obtained for a particular value of $p$. For comparison, we also plot the theoretical upper bounds from Theorem~\ref{thm:sbm_binary_split} and Theorem~\ref{thm:sbm_graphaware}, and we find that these bounds closely match the empirical results. Additionally, the community-aware algorithm consistently outperforms binary splitting. For example, in Figure~\ref{fig:sims1_c}, at $p \approx 0.009$, binary splitting has already exceeded the individual testing threshold with an average of  $1{,}044$ tests, whereas the community-aware algorithm uses an average of $708$ tests; this represents a $32\%$ reduction in testing.  The community-aware algorithm's performance also seems to exhibit lower variance than binary splitting. In Figure~\ref{fig:sims2}, we fix $q_1 = 0.1, \, q_2 = 0.001$, and vary the community size $k \in \{10, \, 50, \, 100, \, 200\}$. The community-aware algorithm appears to perform most favorably (relative to binary splitting) for more intermediate values of $k$, i.e., when there are several moderately sized communities in the network. These findings are consistent with our earlier theoretical results.

The estimated lower bound is fairly close to the community-aware algorithm's bound in the regime where the seeds are very sparse (small $p$) and the network exhibits strong community structure (intermediate $k, q_1$; small $q_2$). This corroborates our analysis from Section~\ref{subsec:k-cliques_discussion}.  In other regimes---such as when $p$, $k$, or $q_2$ are larger---there is a sizable gap between the community-aware bound and the lower bound. However, this gap seems to be at most a constant factor in many cases, suggesting that the order-optimality of the community-aware algorithm still holds in broader regimes. Nevertheless, these results suggest the potential to further improve the non-asymptotic upper or lower bounds.

    \section{Conclusion}\label{sec:conclusion}
In this paper, we investigated the group testing problem over networks with community structure. Motivated by infectious diseases such as COVID-19, we proposed a network-based infection model which generalizes the traditional i.i.d. group testing model to settings in which interactions between individuals dictate the disease spread. Our proposed adaptive algorithm, which exploits the known community structure of the underlying graph, provably outperforms the community-oblivious binary splitting algorithm and is  order-optimal in certain parameter regimes, as implied by our novel lower bounds based on the system entropy. Even in the presence of symmetric noise, our community-oriented approach offers the same gains in testing efficiency as it does in the noiseless case.

We conclude with some future directions. As discussed in Section~\ref{sec:sbm} and further suggested by our simulations, there remains a gap between our upper and lower bounds in certain regimes of the general SBIM, due to the difficulty of bounding $H\lp X_{\calC_1},...,X_{\calC_m}\rp$ when $\lbp X_{\calC_1},...,X_{\calC_m} \rbp$ are not mutually independent. Other directions of interest include designing non-adaptive group testing schemes for our setting, %
deriving bounds under other noise models such as dilution and Z-channel models, and extending our infection model to longer time horizons (e.g., SIR or SIS-type infection models from the epidemiology literature). Finally, characterizing the complexity of group testing under the general graph-based infection model described in Section~\ref{subsec:SBM_equivalence} (beyond the SBIM studied in this paper) is fertile ground for future work.

    \section*{Acknowledgments} 
    This work was supported in part by NSF Grant \#1817205, the Center for Science of Information (CSoI), an NSF Science and Technology Center under grant agreement CCF-0939370, a Cisco Systems Stanford Graduate Fellowship, and a National Semiconductor Corporation Stanford Graduate Fellowship. We thank the anonymous reviewers for their invaluable suggestions which helped us improve this paper. 
	 
	\bibliographystyle{ieeetr}
	\bibliography{references}
	
	\newpage 
	\appendix

\subsection{Proof of Lemma~\ref{lemma:sbm_LB}}\label{app:cond_entropy_LB}
Let $S_i$ be the indicator variable of whether the $i^\text{th}$ individual is a seed. By standard information-theoretic arguments, we have 
\begin{align*}
    H(X_1,\ldots,X_n) &= I(X_1,\ldots, X_n \, ; \, S_1,\ldots, S_n) + H(X_1,\ldots, X_n \, | \, S_1,\ldots, S_n) \\
    &= H(S_1,\ldots, S_n) - H(S_1,\ldots,S_n\,|\,X_1,\ldots,X_n) + H(X_1,\ldots,X_n\,|\,S_1,\ldots,S_n) \\
    &= n\cdot H(S_1) - \sum_{i=1}^n H(S_i \, | \, X_1,\ldots,X_n, S_1,\ldots, S_{i-1}) + H(X_1,\ldots,X_n\,|\,S_1,\ldots,S_n) \\
    &\geq n\cdot H(S_1) - n\cdot H (S_1\,|\,X_1) + H(X_1,\ldots,X_n\,|\,S_1,\ldots,S_n) \\
    &= n\cdot I(X_1;S_1) + H(X_1,\ldots,X_n\,|\,S_1,\ldots,S_n).
\end{align*}
This proves \eqref{eqn:full_LB} in Lemma~\ref{lemma:sbm_LB}.

Next, we prove \eqref{eqn:cond_entropy_LB}. First, note that $H(X) \geq  H\left( X_1,\ldots,X_n \, \middle \vert S_1,\ldots,S_n \, \right)$ since $I(X_1;S_1) \geq 0$. (This also follows directly from the fact that conditioning reduces entropy.) Next, using the shorthand notation $X = (X_1,\ldots,X_n)$, $S = (S_1,\ldots, S_n)$, we have
$$ %
H\left( X \, \middle \vert \, S \right) = \sum_{s\in \left\{0,1\right\}^n} \PP\left( S = s\right)\cdot H\left( X \, \middle \vert \, S=s \right). $$
Observe that after conditioning on the locations of the seeds, $X_1,...,X_n$ are mutually independent. Moreover, for $i \in \calC_\ell$, the marginal distribution of $X_i$ can be specified as follows:
$$ \PP\left( X_i = 1 \, \middle \vert \, S = s \right) = 
\begin{cases} 
1, & \text{ if } s_i = 1,\\
1-(1-q_1)^{\sum_{j \in \calC_\ell} s_j}\lp1-q_2\rp^{\sum_{j\not\in \calC_\ell}s_j}, & \text{ if } s_i = 0.
\end{cases} $$
Writing $z_\ell \triangleq \sum_{j \in \calC_\ell} s_j$, 
the conditional entropy is given by
$$ H\left( X\, \middle \vert \, S=s \right) = \sum_{\ell = 1}^m \left( k - z_\ell\right)\cdot \mathsf{h}_\mathsf{b}\left( 1-(1-q_1)^{z_\ell}\lp 1- q_2 \rp^{\sum_{\ell'\neq \ell}z_{\ell'}} \right), $$
where $\mathsf{h}_\mathsf{b}\left( \cdot \right)$ is the binary entropy function.
Since $S_i \diid \msf{Ber}(p)$, we have $Z_\ell \diid \msf{Binom}(k, p)$ and hence
\begin{equation} \label{eq:bino_expression}
H\left( X \,\middle \vert \, S \right) = \EE_{Z, Z'} \left[m \cdot \left( k - Z\right)\cdot \mathsf{h}_\mathsf{b}\left( 1-(1-q_1)^Z(1-q_2)^{Z'}\right)\right], 
\end{equation}
where $Z \sim \mathsf{Binom}\left( k, p\right)$ and $Z' \sim \mathsf{Binom}\left( n-k, p\right)$.

\QED

\subsection{Proof of Theorem~\ref{thm:sbm_LB}} \label{app:sbm_LB}
In light of Lemma~\ref{lemma:sbm_LB}, it suffices to show that the following lower bound holds under the assumptions in the theorem statement: 
\begin{align*}
\EE_{Z, Z'} \Big[( k - Z)\cdot & \mathsf{h}_\mathsf{b}\Big( 1-(1-q_1)^Z\lp 1- q_2 \rp^{Z'}\Big) \Big] \succeq  mk^2 pq_2\log\lp \frac{1}{npq_2} \rp + k^2p q_1 \log\lp \frac{1}{q_1 + npq_2} \rp,
\end{align*} 
where $Z \sim \mathsf{Binom}\left( k, p\right)$ and $Z'\sim \msf{Binom}(n-k, p)$.
Our proof leverages the concentration of $Z$ and $Z'$ around their means. 

First we assume $n\cdot p \cdot q_2 \preceq 1$, and let $\epsilon \in (0,1)$ be a value to be specified. Define $$ z^* \eqDef \frac{1/2 - np(1+\epsilon)q_2}{q_1}. $$
Then as long as $Z$ and $Z'$ satisfy the following two conditions
\begin{enumerate}
    \item $ \lbp np(1-\epsilon) \leq Z' \leq np(1+\epsilon) \rbp$, 
    \item $ Z \leq z^*$, 
\end{enumerate}
we have 
\begin{equation}
    \frac{1}{2} \geq Z\cdot q_1+ Z'\cdot q_2 \geq 1-\lp1-q_1\rp^Z\lp 1-q_2 \rp^{Z'}.
\end{equation}
Since $1-\lp1-q_1\rp^Z\lp 1-q_2 \rp^{Z'}$ is an increasing function of $Z$ and $Z'$,
$ \msf{h_b}\lp 1-\lp1-q_1\rp^Z\lp 1-q_2 \rp^{Z'} \rp $ must increase with $Z$ and $Z'$ if they satisfy the above conditions. Therefore, we have
\begin{align}\label{eq:thm4_2}
    & \EE_{Z, Z'}\lb (k-Z)\msf{h_b}\lp 1-(1-q_1)^Z\lp 1- q_2 \rp^{Z'} \rp \rb\nonumber\\
    \geq & \EE_{Z, Z'}\lb (k-Z)\msf{h_b}\lp 1-(1-q_1)^Z\lp 1- q_2 \rp^{Z'}\rp\cdot\bbm{1}_{\lbp 0\leq Z\leq z^* \rbp}\cdot \bbm{1}_{\lbp np(1-\epsilon) \leq Z' \leq np(1+\epsilon)\rbp} \rb \nonumber\\
    \geq & \underbrace{\EE_{Z, Z'}\lb (k-Z)\msf{h_b}\lp 1-\lp 1- q_2 \rp^{Z'}\rp\cdot\bbm{1}_{\lbp Z = 0 \rbp}\cdot \bbm{1}_{\lbp np(1-\epsilon) \leq Z' \leq np(1+\epsilon)\rbp} \rb}_{\text{(a)}} + \nonumber\\
    &\underbrace{\EE_{Z, Z'}\lb (k-Z)\msf{h_b}\lp 1-(1-q_1)^Z\lp 1- q_2 \rp^{Z'}\rp\cdot\bbm{1}_{\lbp 1\leq Z\leq z^* \rbp}\cdot \bbm{1}_{\lbp np(1-\epsilon) \leq Z' \leq np(1+\epsilon)\rbp} \rb}_{\text{(b)}}.
\end{align}
We will pick $\epsilon = \frac{1}{2}$. Then (a) can be bounded by
\begin{align*}
    \text{(a)} &\geq k\cdot \msf{h_b}\lp q_2\cdot np(1-\epsilon) - \lp q_2\cdot np(1-\epsilon) \rp^2 \rp \lp 1-2\cdot\exp\lp -\frac{n\epsilon^2p}{3} \rp \rp \\
    & \succeq k\lp npq_2(1-\epsilon)\log\lp \frac{1}{npq_2(1-\epsilon)} \rp \lp 1-2\cdot\exp\lp -\frac{n\epsilon^2p}{3} \rp\rp\rp  \\
    & \succeq k\lp npq_2\log\lp \frac{1}{npq_2} \rp \rp
\end{align*}
 where in the first inequality we use
 \begin{enumerate}
     \item $Z' \geq np(1-\epsilon)$
     \item $(1-q_2)^{Z'} \leq e^{-q_2\cdot Z'} \leq 1-q_2\cdot Z' +\lp q_2\cdot Z' \rp^2$
     \item Chernoff bound on $Z'$, 
 \end{enumerate}
and in the third inequality we assume $np \succeq 1$.
Next, (b) can be bounded by
\begin{align*}
    \text{(b)} &\geq \msf{h_b}\lp q_1 + npq_2(1-\epsilon) - \lp q_1 + npq_2(1-\epsilon) \rp^2 \rp \cdot \EE_Z\lb (k-Z)\bbm{1}_{\lbp 1\leq Z \leq z^* \rbp} \rb \cdot \lp 1-2\cdot\exp\lp -\frac{n\epsilon^2p}{3} \rp \rp \\
    & \succeq \lp q_1 + npq_2 \rp \log\lp \frac{1}{q_1 + npq_2} \rp\cdot \EE_Z\lb (k-Z)\bbm{1}_{\lbp 1\leq Z \leq z^* \rbp} \rb.
\end{align*}
We will now lower bound $\EE_Z\lb (k-Z)\bbm{1}_{\lbp 1\leq Z \leq z^* \rbp} \rb$ as in Theorem~\ref{thm:k-clique_LB}. Observe that
\begin{align}\label{eq:thm4_1}
    \EE_Z\lb (k-Z)\bbm{1}_{\lbp 1\leq Z \leq z^* \rbp} \rb 
    & \geq \EE_Z\lb k-Z \rb - k \cdot \PP\lbp Z = 0 \rbp - k \cdot \PP\lbp Z\geq z^* \rbp \nonumber\\
    & \geq k\lp 1- p -(1-p)^k - \PP\lbp Z\geq z^* \rbp \rp \nonumber\\
    & \succeq k\lp kp- \PP\lbp Z\geq z^* \rbp \rp.
\end{align}
Finally, applying Hoeffding's inequality to $\PP\lbp Z\geq z^* \rbp$ yields
\begin{align*}
    \PP\lbp Z\geq z^* \rbp 
    & \leq \exp\lp -2k\lp p - \frac{z^*}{k} \rp^2 \rp = \exp\lp -2k\lp p - \frac{\frac{1}{2} - npq_2(1+\epsilon)}{q_1k} \rp^2 \rp \\
    & \overset{\text{(1)}}{\preceq}  \exp\lp -2k\lp \frac{1}{2q_1k} \rp^2 \rp = \exp\lp -\frac{1}{2kq_1^2} \rp \overset{\text{(2)}}{\leq} \frac{kp}{2},
\end{align*}
where in (1) we use the facts that 1) $n\cdot p\cdot q_2 \preceq 1$ and 2) $p\preceq \frac{1}{q_1k}$, and (2) holds when 
$$ q_1 \leq \frac{1}{\sqrt{2k\cdot\lp \log\lp \frac{1}{kp} \rp+1\rp}}. $$
Plugging into \eqref{eq:thm4_1} yields
\begin{equation}
    \EE_Z\lb (k-Z)\bbm{1}_{\lbp 1\leq Z \leq z^* \rbp} \rb \succeq k^2p,
\end{equation}
and thus by putting together our bounds on (a) and (b) in \eqref{eq:thm4_2}, we arrive at
\begin{align}
    &\EE_{Z, Z'}\lb (k-Z)\msf{h_b}\lp 1-(1-q_1)^Z\lp 1- q_2 \rp^{Z'} \rp \rb \\
    \geq &k\lp npq_2\log\lp \frac{1}{npq_2} \rp \rp + k^2p\cdot\lp q_1 + npq_2 \rp \log\lp \frac{1}{q_1 + npq_2} \rp \\
    \geq & mk^2 pq_2\log\lp \frac{1}{npq_2} \rp + k^2p\cdot q_1 \log\lp \frac{1}{q_1 + npq_2} \rp.
\end{align}

\QED

\subsection{Proof of Lemma~\ref{lem:sbm_marginal}}\label{app:sbm_marginal} 

Let $S_v$ be the indicator random variable of whether an individual $v$ is a seed, and assume without loss of generality that $v \in \calC_1$. 
We have 
		\begin{align}
		\PP(X_v = 1) &= \underbrace{\PP(X_v = 1\, | \, S_v = 1)}_{=1}\cdot \underbrace{\PP(S_v = 1)}_{=p} +  \PP\lp X_v = 1 \, \middle\vert \, S_v = 0 \rp)\cdot \PP(S_v = 0) \nonumber \\
		&= p + (1-p) \cdot \PP\lp X_v = 1 \, \middle\vert \, S_v = 0 \rp. \nonumber %
		\end{align}
		Given that $v$ is not a seed, $X_v = 1$ if and only if $v$ is infected by another seed. Hence, 
		\begin{align*}
		\PP \lp X_v = 1 \, \middle\vert \, S_v = 0 \rp &= \PP\bigl( \lbp v \text{ is infected by another individual}\rbp \bigr) \\
		&= 1- \prod_{u \in [n]} \PP\bigl( \lbp v \text{ isn't infected by } u \rbp \bigr) \\
		&= 1- \prod_{u \in [n]} \Bigl( 1-\PP\bigl( \lbp v \text{ is infected by } u \rbp \bigr) \Bigr)  \\
		&= 1 - \prod_{u \in [n]} \Bigl( 1-\PP \lp \lbp v \text{ is infected by } u \rbp \, \middle\vert \, S_u = 1 \rp \cdot \PP(S_u = 1 ) \Bigr)   \\
		&= 1 - \Bigg(\prod_{u \in \calC_1 \backslash \{v\}} (1-p \cdot q_1)\Bigg)\cdot \Bigg(\prod_{w \not\in \calC_1} (1-p \cdot q_2)\Bigg) \\
		&=  1- (1-p\cdot q_1)^{k-1}\cdot (1-p\cdot q_2)^{n-k}.
		\end{align*}

\QED 

\subsection{Proof of Lemma~\ref{lem:sbm_clique_marginal}}\label{app:sbm_clique_marginal}
Let $\calA$ be the event that no member of community $\calC_1$ is selected as a seed, and let $\calB$ be the event that some member of $\calC_1$ is infected by an individual outside $\calC_1$. We further denote by $\calB_u$ the event that vertex $u$ infects some member of $\calC_1$, where $u \not\in \calC_1$. Note that $X_{\calC_1} = 1$ if and only if either $\calA^c$ occurs or $\calA \cap \calB$ occurs. Moreover, $\calA$ and $\calB$ are independent events. We have that $\PP(\calA) = (1-p)^k$, and thus 
\begin{align*}
    \PP(X_{\calC_1} = 1) = \PP(\calA^c) + \PP(\calA)\cdot \PP(\calB)
    &= 1- (1-p)^k + (1-p)^k \cdot \PP(\calB) \\
    &= 1 - (1-p)^k \cdot (1-\PP(\calB)).
\end{align*}

Finally, we compute $\PP(\calB)$ as 
\begin{align*}
    \PP(\calB) &= 1 - \prod_{u \not\in \calC_1} \PP(\calB_u^c) \\
    &= 1 - \prod_{u \not\in \calC_1} \Big(\PP(\calB_u^c\, | \, S_u = 1)\cdot \underbrace{\PP(S_u = 1)}_{=p} + \underbrace{\PP(\calB_u^c \, | \, S_u = 0)}_{= 1} \cdot \underbrace{\PP(S_u = 0)}_{=1-p} \Big) \\
    &= 1 - \prod_{u \not\in \calC_1} \Big(1-p + p \cdot \PP(\calB_u^c \, | \, S_u = 1)\Big) \\
    &= 1 - \prod_{u \not\in \calC_1} \Big(1-p + p \cdot (1-q_2)^k \Big) \\
    &= 1 - \Bigg(1 - p\cdot \Big(1-(1-q_2)^k\Big)\Bigg)^{n-k}.
\end{align*}

\QED

\subsection{Proof of Lemma~\ref{lem:k-clique_LB}} \label{app:k-clique_LB}
Let $f\left( q\right) = \frac{\log\left( q \right)}{\log\left(1-q\right)}$, so that $f(q)$ solves $1-(1-q)^Z = 1-q$. Then %
\begin{align}\label{eq:lb}
    \EE_Z \left[\left( k - Z\right)\cdot \mathsf{h}_\mathsf{b}\left( 1-(1-q)^Z\right)\right]
    &\geq \EE_Z \left[\left( k - Z\right)\cdot \mathsf{h}_\mathsf{b}\left( 1-(1-q)^Z\right)\cdot \mathbbm{1}_{\left\{1\leq Z \leq f(q) \right\} }\right]\nonumber \\
    &\overset{\text{(a)}}{\geq} \mathsf{h}_\mathsf{b}\left( q\right)\cdot\EE_Z \left[\left( k - Z\right)\cdot \mathbbm{1}_{\left\{1\leq Z \leq f(q) \right\} }\right]\nonumber\\
    &\geq \mathsf{h}_\mathsf{b}(q) \left(\EE_Z \left[k - Z\right]- k\cdot \PP\left\{Z=0\right\}- k\cdot \PP\left\{Z > f(q) \right\} \right) \nonumber\\
    &= k\cdot\mathsf{h}_\mathsf{b}(q)\left( (1-p)\left(1-(1-p)^{k-1}\right)-\PP\left\{Z>f(q) \right\}\right)\nonumber\\
    &\overset{\text{(b)}}{\geq} k\cdot\mathsf{h}_\mathsf{b}(q)\left( (1-p)\left( (k-1)p - (k-1)^2p^2\right)-\PP\left\{Z>f(q) \right\}\right) \nonumber\\
    & \overset{\text{(c)}}{\succeq} \frac{k}{2}\cdot\mathsf{h}_\mathsf{b}(q)\left( k\cdot p-\PP\left\{Z>f(q) \right\}\right),
\end{align}
where (a) is due to the fact that $\msf{h_b}(x) \geq \msf{h_b}(q)$ for all $q \leq x \leq 1-q$, (b) holds since $(1-p)^r \leq e^{-pr}$ and $e^x \leq 1+x+x^2$ for $x \leq 1$, and (c) is due to the assumption $p\preceq 1/k$.

We then upper bound $\PP\left\{Z>f(q) \right\}$ by Hoeffding's inequality:
\begin{align*}
    \PP\left\{Z>f(q)\right\}
     \leq \exp\left( -2k\left( p- \frac{f(q)}{k}\right)^2 \right) 
     \overset{\text{(a)}}{\preceq} \exp\left( -2k\left( \frac{f(q)}{2k}\right)^2 \right)
     \leq \exp\left( -\frac{f(q)^2}{2k}\right) 
    \overset{\text{(b)}}{\preceq} \frac{kp}{2},
\end{align*}
where (a) holds by the assumption $k\cdot p\cdot q \preceq 1$, so that
$$ k\cdot p\preceq \frac{q}{2}\log\left(\frac{1}{q}\right) \leq \frac{q}{1-q}\log\left( \frac{1}{q}\right) \leq f(q),$$
and (b) holds due to the assumption $q \preceq \frac{1}{\sqrt{k}\cdot\sqrt{\log\left( \frac{1}{k\cdot p} \right)}}$. Plugging into \eqref{eq:lb} yields
$$ \EE_Z \left[\left( k - Z\right)\cdot \mathsf{h}_\mathsf{b}\left( 1-(1-q)^Z\right)\right] \succeq k^2\cdot p \cdot q\cdot\log\lp \frac{1}{q} \rp \succeq k^2\cdot p\cdot q \cdot \lp\log k + \log\log\lp \frac{1}{kp} \rp\rp, $$
where in the last inequality we use the assumption $q \preceq \frac{1}{\sqrt{k}\cdot\sqrt{\log\left( \frac{1}{k\cdot p} \right)}}$ again.

\QED

\subsection{Proof of Theorem~\ref{thm:sbm_graphaware}}\label{app:sbm_graphaware}
    Let $T_1$ and $T_2$ be the number of tests performed, respectively, in Step 2 and Step 3 of the community-aware algorithm. Specifically, $T_1$ is equal to the number of tests used by binary splitting to identify the infected communities, and $T_2$ is  the number of tests to identify infected individuals within each infected community. Note that $T = T_1 + T_2$. We will bound $\EE[T_1]$ and $\EE[T_2]$ separately.
    
	Let $N$ be the number of infected communities. By Lemma~\ref{lem:sbm_clique_marginal}, we have 
	\[\EE[N] = \frac{n}{k}\cdot \PP(X_{\calC_1} = 1) = \frac{n}{k}\cdot \Bigg(1 - (1-p)^k \cdot \Bigg(1 - p\cdot \Big(1-(1-q_2)^k\Big)\Bigg)^{n-k}\Bigg).\]
	Taking Lemma~\ref{lem:binary_split} with $n = n/k$ and $\alpha = N$ gives 
	\[T_1 \leq (\log_2(n/k) + 2) \cdot N + 1\]
	so that 
	\[\EE[T_1] \leq \frac{n}{k}\cdot \Big(\log_2(n/k) + 1\Big)\cdot \Bigg(1 - (1-p)^k \cdot \Bigg(1 - p\cdot \Big(1-(1-q_2)^k\Big)\Bigg)^{n-k}\Bigg) + 1.\]
	For the second stage of the algorithm, let $Z_i$ denote the number of tests used by binary splitting to identify all infected members of the $i^\text{th}$ community. Since $T_2 = \Sum{i=1}{n/k} Z_i \cdot \mathbbm{1}_{\lbp X_{\calC_i} = 1 \rbp}$, we have 
	\begin{align*}
		\EE[T_2] &= \Sum{i=1}{n/k} \EE \lb Z_i \cdot \mathbbm{1}_{\lbp X_{\calC_i} = 1\rbp } \rb = \frac{n}{k} \cdot  \EE \lb Z_1 \cdot \mathbbm{1}_{\lbp X_{\calC_1} = 1\rbp } \rb = \frac{n}{k}\cdot \PP(X_{\calC_1} = 1) \cdot  \EE\lb Z_1 \, \middle\vert \, X_{\calC_1} = 1 \rb  \\
	\end{align*}
	
	Let $M$ denote the number of infected members of $\calC_1$. Then by Lemma~\ref{lem:binary_split},  
    $$\EE\lb Z_1 \, \middle\vert \, X_{\calC_1} = 1 \rb \leq (\log_2k + 2) \cdot \EE \lb M \, \middle\vert \, X_{\calC_1} = 1 \rb + 1$$
    and, assuming without loss of generality that $\calC_1 = [k]$, 
    \begin{align*}
        \EE\lb M \, \middle\vert \, X_{\calC_1} = 1 \rb &= \Sum{j=1}{k} \PP\lp X_j = 1 \, \middle\vert \, X_{\calC_1} = 1 \rp \\
        &= k\cdot \PP \lp X_1 = 1 \, \middle\vert \, X_{\calC_1} = 1 \rp \\
        &= k \cdot \frac{\PP(X_1 = 1, \, X_{\calC_1} = 1)}{\PP(X_{\calC_1} = 1) } \\
        &= k \cdot \frac{\PP(X_1 = 1)}{\PP(X_{\calC_1} = 1)} \\
        &= k \cdot \frac{1 - (1-p) \cdot (1-p \cdot q_1)^{k-1} \cdot (1-p \cdot q_2)^{n-k}}{\PP(X_{\calC_1} = 1)}
    \end{align*}
    where in the last line we invoke Lemma~\ref{lem:sbm_marginal}.
	Putting everything together gives 
	\[\EE[T_2] \leq n \cdot (\log_2 k + 2) \cdot \Big(1 - (1-p) \cdot (1-p \cdot q_1)^{k-1} \cdot (1-p \cdot q_2)^{n-k}\Big) + \frac{n}{k}\cdot \Bigg(1 - (1-p)^k \cdot \Bigg(1 - p\cdot \Big(1-(1-q_2)^k\Big)\Bigg)^{n-k}\Bigg)\]
	and therefore 
	\begin{align*}
    \EE[T] \leq \frac{n}{k} \cdot &\Big(\log_2(n/k) + 3\Big) \cdot \Bigg(1 - (1-p)^k \cdot \Bigg(1 - p\cdot \Big(1-(1-q_2)^k\Big)\Bigg)^{n-k}\Bigg) \\
&+ n \cdot \Big(\log_2 k + 1\Big) \cdot \Big(1 - (1-p) \cdot (1-p \cdot q_1)^{k-1} \cdot (1-p \cdot q_2)^{n-k}\Big) + 1\\
\end{align*}  
	
\QED

\end{document}